\newcommand{\nc}{\newcommand}
\nc{\ben}{\begin{eqnarray}}
\nc{\een}{\end{eqnarray}}
\newcommand{\beqa}{\begin{eqnarray}}
\newcommand{\eeqa}{\end{eqnarray}}
\nc{\Z}{{\bold Z}}
\newtheorem{cor}{Corollary}[section]
\newtheorem{prop}{Proposition}[section]
\newtheorem{example}{Example}
\newtheorem{defn}{Definition}[section]
\newtheorem{thm}{Theorem}
\newtheorem{rem}{Remark}
\newcommand{\cT}{\mathscr{T}}
\newcommand{\cB}{\mathscr{B}}
\newcommand{\cO}{\mathscr{O}}
\newcommand{\CC}{\mathbb{C}}
\newcommand{\ZZ}{\mathbb{Z}}
\newcommand{\cal}{\mathcal}
\newcommand{\oA}{\overline{A}}
\newcommand{\oG}{\overline{G}}
\newcommand{\oK}{\overline{K}}
\newcommand{\oZ}{\overline{Z}}
\newcommand{\oH}{\overline{H}}
\newcommand{\oE}{\overline{E}}
\newcommand{\oF}{\overline{F}}
\newcommand{\E}{\mathbb{E}}
\newcommand{\F}{\mathbb{F}}
\newcommand{\HH}{\mathbb{H}}
\numberwithin{equation}{section}
\begin{document}

\title[FRT presentation and Onsager algebras]{FRT presentation of the Onsager algebras}
\author{Pascal Baseilhac$^{*}$}
\address{$^*$ Laboratoire de Math\'ematiques et Physique Th\'eorique CNRS/UMR 7350,
 F\'ed\'eration Denis Poisson FR2964,
Universit\'e de Tours,
Parc de Grammont, 37200 Tours, 
FRANCE}
\email{baseilha@lmpt.univ-tours.fr}

\author{Samuel Belliard$^{**}$}
\address{$^{**}$ Institut de Physique Th\'eorique, DSM, CEA, URA2306 CNRS Saclay, F-91191 Gif-sur-Yvette, FRANCE}
\email{samuel.belliard@gmail.com}

\author{Nicolas Cramp\'e$^{\dagger,*}$}
\address{$^\dagger$ Laboratoire Charles Coulomb (L2C), UMR 5221 CNRS-Universit\'e de Montpellier, Montpellier, F-France}
\email{nicolas.crampe@umontpellier.fr}

\begin{abstract} 
A presentation {\`a la} Faddeev-Reshetikhin-Takhtajan (FRT) of the Onsager, augmented Onsager and $sl_2$-invariant Onsager algebras is given, using the framework of the non-standard classical Yang-Baxter algebras. Associated current algebras are identified, and generating functions of mutually commuting quantities are obtained. 
\end{abstract}

\maketitle

\vskip -0.5cm

{\small MSC:\ 81R50;\ 81R10;\ 81U15.}

{{\small  {\it \bf Keywords}: Onsager algebras; Tridiagonal algebra; Current algebra; Yang-Baxter algebra; Integrable systems.}}
\vspace{0cm}

\vspace{3mm}

\vspace{2mm}

\section{Introduction}
Introduced by L. Onsager in the investigation of the exact solution of the two-dimensional Ising model \cite{Ons44}, the Onsager algebra is an
 infinite dimensional Lie algebra with two known presentations. The original presentation is given in terms of generators 
$\{A_n,G_m|n,m\in {\mathbb Z}\}$ and relations (see Definition \ref{def:OA}).  A second presentation\footnote{The connection between the Dolan-Grady construction and the original Onsager algebra was understood by J.H.H. Perk in 1982 \cite{Perk1}.} is given in terms
of two generators $A_0,A_1$ satisfying the so-called Dolan-Grady relations (\ref{eq:DG}) \cite{DG82}.  Later on, it was shown that the Onsager algebra 
is isomorphic
to a fixed point subalgebra of the affine Kac-Moody algebra $\widehat{sl_2}$ under the Chevalley involution \cite{Davies,Roan}.\vspace{1mm}

In the context of  mathematics and quantum integrable systems, a $q-$deformed analog of the Onsager algebra has been introduced in recent years \cite{Ter03,B1}. 
It is isomorphic
 to a coideal  subalgebra of $U_q(\widehat{sl_2})$ \cite{BB} (see also \cite{Kolb}). Similarly to the  classical
 (undeformed) Onsager algebra, two presentations are known, which can be viewed as $q-$deformed analog of the original presentations of Onsager \cite{BK,BK4} 
 and Dolan-Grady \cite{Ter03,B1}. Other 
types of algebras have been later on considered in the literature:
 the augmented ($q-$)Onsager algebra \cite{IT,BB3,BC13} and the $U_q(gl_2)-$invariant $q-$Onsager algebra \cite{BB17}. Given the known automorphisms of $U_q(\widehat{sl_2})$, these other types of $q-$Onsager algebras
 can be understood as different coideal subalgebras of $U_q(\widehat{sl_2})$. For all these different types of $q-$Onsager algebras, a third presentation 
 is also identified \cite{B1,BSh1,BB5} using the framework of the quantum reflection algebra 
\cite{Skly88}. Based on this presentation, one obtains current algebras for the $q-$Onsager algebras \cite{BSh1} that 
found applications in the solution of the open XXZ spin  chain \cite{BB3}. This third presentation gives also a tool to derive explicit examples of tridiagonal pairs \cite{Ter03} or to conjecture
a Poincar\'e-Birkhoff-Witt basis for the $q-$Onsager algebra \cite{BB5}, besides the so-called zig-zag basis of Ito-Terwilliger \cite{IT}. \vspace{1mm}

By analogy with the family of $q-$Onsager algebras that admit a presentation within the framework of the quantum reflection algebra, it is
 thus natural to search for this `missing' presentation of the classical Onsager algebras. In this letter, we answer this question. Namely, it is shown that all classical Onsager
 algebras admit a presentation within the framework of the non-standard classical Yang-Baxter algebras. This link offers a new perspective 
 to study the related integrable systems. \vspace{1mm}

This letter is organized as follows. In Section 2, the framework of the classical Yang-Baxter algebra $\cT$ (with central extension)
and the non-standard classical Yang-Baxter algebra $\cB$ is settled. Automorphisms of $\cT$ are considered, which are 
used to construct homomorphisms from $\cB$ to $\cT$. Also, based on solutions of the 
classical reflection equation, a different homomorphism is proposed. Then, two commutative subalgebras of $\cB$ are identified.  In Section 3, we construct  some of the simplest
explicit examples of non-standard classical Yang-Baxter algebras $\cB$ related with the affine Kac-Moody algebra $\widehat{sl_2}$.
First, we introduce some necessary material. Different known presentations
of $\widehat{sl_2}$ are also recalled: the Serre-Chevalley presentation \cite{Kac}, 
the Cartan-Weyl presentation \cite{GO} and the so-called FRT presentation (in honour of
Faddeev-Reshetikhin-Takthadjan \cite{FRT87}).
As an application of the results of Section 2, it is shown that known automorphisms of $\widehat{sl_2}$ are easily recovered from the automorphisms of the classical Yang-Baxter algebra.  Then, three different non-standard classical Yang-Baxter algebras are considered. In each case, the map to subalgebras of $\widehat{sl_2}$ is described in details. 
In Section 4, it is shown that these three non-standard classical Yang-Baxter algebras are isomorphic to the Onsager algebra, the augmented Onsager algebra 
and the $sl_2$ invariant Onsager algebra, respectively. In each case, our construction provides an FRT presentation. 
Their corresponding current presentations are  displayed. 
Finally, as an application of the FRT presentations, we derive generating functions of mutually commuting quantities for each of the Onsager algebras. 
Concluding remarks are given in Section 5.\vspace{1mm}

\section{Classical Yang-Baxter algebras and commutative subalgebras}
In this section, we introduce the basic definitions of the standard and non-standard classical Yang-baxter algebras, denoted $\cT$ and $\cB$, respectively. 
See Definitions \ref{def:T} and \ref{def:B}. A class of automorphisms of  $\cT$ is exhibited in Proposition \ref{prop:tw}, and the 
homomorphic image of $\cB$ into the fixed point subalgebra of $\cT$ under this automorphism is described in Proposition \ref{propB}. 
A different homomorphic image of $\cB$ into a subalgebra of $\cT$ is also given in Proposition \ref{propB2}, based on solutions of the classical reflection equation. 
Finally,  two different commutative subalgebras of $\cB$ are displayed, see Propositions \ref{pr:a1} and \ref{pr:a2}.

\subsection{Classical Yang-Baxter algebras\label{sec:FRT}}


The classical Yang-Baxter equation appeared as a limit of the quantum inverse scattering method \cite{Sk2}
and became quickly popular \cite{BD,dri83,Sem}. In this letter, we use two slightly different definitions (see Definitions \ref{def:cybe} and \ref{def:nscybe}).

\begin{defn}  \label{def:cybe}
The matrix
  $r(x)\in End(\CC^N\otimes \CC^N)$ is called a classical r-matrix if it satisfies the skew-symmetric condition $r_{12}(x)=-r_{21}(1/x)$ and
the classical Yang-Baxter equation
 \begin{equation}\label{eq:CYBE}
  [\ r_{13}(x_1/x_3)\ , \ r_{23}(x_2/x_3)\ ]=[\ r_{13}(x_1/x_3)+  r_{23}(x_2/x_3)\ , \ r_{12}(x_1/x_2)\ ]\;
 \end{equation}
for any $x_1,x_2,x_3$.  Here, $r_{12}(x) = r(x)\otimes I\!\! I$ , $r_{23}(x) =I\!\! I\otimes  r(x) $  and so on.
 \end{defn}
 \vspace{1mm}

For a given skew-symmetric r-matrix $r(x)$ that satisfies the classical Yang-Baxter equation, we now introduce the Lie algebra 
$\cT$ with central extension $c$. Generically, this algebra is called the classical Yang-Baxter algebra. 
Note that this algebra can be understood as a classical analog of the algebra introduced in \cite{RS}.
\vspace{1mm}

Let $\{ E_{ij}\ | \ 1\leq i,j\leq N \}$ 
denotes the standard basis of $End(\CC^N)$ (\textit{i.e.} the $N\times N$ matrices with components $(E_{ij})_{kl}=\delta_{ik}\delta_{jl}$). Below we use the Einstein summation convention by omitting the sums over the repeated indices $i,j$.

\begin{defn}\label{def:T} Let  $r(x)\in End(\CC^N\otimes \CC^N)$ be a classical r-matrix.
 $\cT$ is the Lie algebra with generators\footnote{According to the examples, the range of $\ell$ will be restricted.}
$\{c, t_{ij}^{\pm [\ell]} \ |\ 1\leq i,j\leq N,\ell\in\ZZ \}$. Introduce:
\begin{equation}
 T^+(x)= E_{ij} \otimes  \sum_{\ell\in \ZZ}t_{ij}^{+[\ell]} x^\ell\ ,\quad  T^-(x)= E_{ij} \otimes  \sum_{\ell\in \ZZ}t_{ij}^{-[\ell]} x^\ell.
\end{equation}
The defining relations are:
\begin{eqnarray}
\null[ T^{\pm}(x),c]&=&0\;,\label{Tg}\\
\null[ T_1^{\pm}(x),T^{\pm}_2(y)]&=&[ T_1^{\pm}(x)+T^{\pm}_2(y),r_{12}(x/y)]\;,\label{rpp}\\
\null[ T_1^{+}(x),T^{-}_2(y)]
&=&[ T_1^{+}(x)+T^{-}_2(y),r_{12}(x/y)]-2c\,r'_{12}(x/y)x/y\ .\label{rpm}
\;
\end{eqnarray}
\end{defn}
\begin{rem} The Jacobi identity of the Yang-Baxter algebra $\cT$ follows from the classical Yang-Baxter equation and its derivative. The latter reads:
\begin{eqnarray}
 [ \ f_{13}(x_1/x_3) +  f_{23}(x_2/x_3) \ , \ r_{12}(x_1/x_2) \ ]=[ \ f_{13}(x_1/x_3)\ ,\  r_{23}(x_2/x_3)\ ] +   [  \  r_{13}(x_1/x_3)\ ,\ f_{23}(x_2/x_3)\ ]\;\nonumber
\end{eqnarray}
where $f(x)=xr'(x)$.
\end{rem}
\vspace{1mm}

The presentation of the commutation relations \eqref{Tg}-\eqref{rpm} of a Lie algebra $\cT$ is called a {\it FRT presentation} in honour of the authors  Faddeev-Reshetikhin-Takhtajan \cite{FRT87}. \vspace{1mm}

A class of automorphisms of the Lie algebra $\cT$ is now considered.
\begin{prop}\label{prop:tw} Let $r(x)$ be a classical r-matrix with the additional property
 $ r_{12}(x)=-r_{12}(1/x)^{t_1t_2}$\; and $U(x)$ be a $N\times N$ invertible matrix satisfying
\begin{eqnarray} 
&& U(x)^t=\epsilon \ U(1/x)\quad\text{and}\quad U_1(x)U_2(y) r_{12}(x/y)=r_{12}(x/y)U_1(x)U_2(y)\; \quad \mbox{for $\epsilon=+1$ or $\epsilon=-1$}.\label{eqU}
\end{eqnarray}
Then, the map $\theta:End(\CC^N)\otimes\cT \rightarrow End(\CC^N)\otimes\cT$ such that 
\begin{eqnarray}\label{eq:auU}
 T^\pm(x)\mapsto  U(x)\ T^\mp(1/x)^t\ U(x)^{-1} \ \mp \ c x U'(x)U(x)^{-1}, \quad c \mapsto -c\ ,
\end{eqnarray}
where $(.)^t$ stands for the transposition in $End(\CC^N)$, provides an involutive automorphism of $\cT$. 
\end{prop}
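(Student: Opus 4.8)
The plan is to prove the two properties that together make $\theta$ an involutive automorphism: that $\theta^{2}=\mathrm{id}$, and that $\theta$ respects the defining relations \eqref{Tg}--\eqref{rpm}. Since an involution is in particular bijective, the second property upgrades $\theta$ from a homomorphism to an automorphism. Throughout I treat $\theta$ as acting only on the $\cT$-factor, so that it commutes with the matrix transposition $(.)^{t}$ and passes freely through the $c$-independent matrices $U(x)$.

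First I would check $\theta^{2}=\mathrm{id}$. Applying \eqref{eq:auU} twice, one is reduced to simplifying $U(x)\big(U(1/x)\,T^{\pm}(x)^{t}\,U(1/x)^{-1}\big)^{t}U(x)^{-1}$ together with the two inhomogeneous $c$-terms. The relation $U(x)^{t}=\epsilon\,U(1/x)$ gives $U(1/x)^{t}=\epsilon\,U(x)$ and $(U(1/x)^{-1})^{t}=\epsilon\,U(x)^{-1}$, so (using $\epsilon^{2}=1$) the nested conjugation collapses to the identity on the $T$-part. The $c$-terms then cancel: differentiating $U(x)^{t}=\epsilon\,U(1/x)$ yields $(U'(1/x)U(1/x)^{-1})^{t}=-x^{2}\,U(x)^{-1}U'(x)$, which is precisely what cancels $cx\,U'(x)U(x)^{-1}$ against the image of the first $c$-term under the second application of $\theta$ (recall $\theta(c)=-c$). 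This step uses only the first identity in \eqref{eqU}.

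Next I would verify that the relations are preserved, which is the bulk of the work. Relation \eqref{Tg} is immediate since $\theta(c)=-c$ is still central. For \eqref{rpp}, note that the inhomogeneous terms of $\theta(T^{\pm})$ are central matrices and drop out of the left-hand commutator, leaving $U_1(x)U_2(y)\,[\,(T_1^{\mp}(1/x))^{t_1},(T_2^{\mp}(1/y))^{t_2}\,]\,U_1(x)^{-1}U_2(y)^{-1}$. I obtain the inner bracket by applying $t_1t_2$ to \eqref{rpp} for the opposite sign; the crucial point is that $t_1t_2$ transposes only the matrix factors and leaves the $\cT$-valued entries in place, so a commutator is preserved rather than reversed, while $r_{12}(x/y)^{t_1t_2}=-r_{12}(y/x)$ follows from $r_{12}(x)=-r_{12}(1/x)^{t_1t_2}$. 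Conjugating back and using the intertwining relation $U_1U_2\,r_{12}=r_{12}\,U_1U_2$ from \eqref{eqU} to commute $U_1U_2$ past $r_{12}(x/y)$ reproduces $[\,\theta(T_1^{\pm}(x))+\theta(T_2^{\pm}(y)),r_{12}(x/y)\,]$ up to a residual term $[\,Z,r_{12}(x/y)\,]$ with $Z=c\big(xU_1'U_1^{-1}+yU_2'U_2^{-1}\big)$. This residual vanishes: applying the Euler operator $x\partial_x+y\partial_y$, which annihilates $r_{12}(x/y)$, to the intertwining relation shows that $xU_1'(x)U_2(y)+yU_1(x)U_2'(y)$ commutes with $r_{12}(x/y)$; since $(U_1U_2)^{-1}$ does too and $Z=c\,\big(xU_1'U_2+yU_1U_2'\big)(U_1U_2)^{-1}$, we get $[\,Z,r_{12}(x/y)\,]=0$.

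Finally, relation \eqref{rpm} is handled the same way, starting from the $[\,T_1^{-},T_2^{+}\,]$ commutator obtained from \eqref{rpm} via the permutation $P_{12}$ and the skew-symmetry $r_{12}(x)=-r_{21}(1/x)$; this version carries a $+2c\,(x/y)\,r'_{12}(x/y)$ term, and here the sign change $\theta(c)=-c$ is exactly what makes the central contribution come out right. The new feature is $r'_{12}$: after conjugation one must compare $U_1U_2\,r'_{12}(x/y)\,U_1^{-1}U_2^{-1}$ with $r'_{12}(x/y)$, and I control the difference by differentiating the intertwining relation in $x$ and in $y$, expressing $[\,U_1U_2,r'_{12}\,]$ through commutators of $r_{12}$ with $U_1'U_2$ and $U_1U_2'$. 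Assembling these with the residual $c$-terms, the whole expression again collapses onto the identity $[\,xU_1'U_1^{-1}+yU_2'U_2^{-1},r_{12}(x/y)\,]=0$ established above. I expect this last bookkeeping---arranging the inhomogeneous $c$-terms and the $r'_{12}$ contribution so that they cancel---to be the main obstacle, with the transposition conventions (partial transpose preserving the order of the $\cT$-coefficients) being the other point where care is essential.
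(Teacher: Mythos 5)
Your proposal is correct and follows essentially the same route as the paper's proof: each defining relation is reduced to the identity $[\,xU_1'(x)U_1(x)^{-1}+yU_2'(y)U_2(y)^{-1},\,r_{12}(x/y)\,]=0$ (and, for the mixed relation, its companion involving $r'_{12}$), both obtained from $x\,d/dx$ and $y\,d/dy$ derivatives of the intertwining relation in \eqref{eqU}, while the involutivity check uses $U(x)^t=\epsilon U(1/x)$ and its derivative exactly as in the paper. Your explicit Euler-operator phrasing and the careful tracking of the partial-transpose conventions are just more detailed renderings of the steps the paper summarizes as ``straightforward simplifications.''
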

\begin{proof} 
First, we show that $\theta$ is an automorphism. 
The action of $\theta$ on \eqref{Tg} is trivially computed. Consider the action of $\theta$ on (\ref{rpp}). After straightforward simplifications, the corresponding equation  reduces to:
\begin{eqnarray}
 [\ xU_1'(x)U_1(x)^{-1} + yU_2'(y)U_2(y)^{-1}\ ,\ r_{12}(x/y)]=0.\label{eq:cU}
\end{eqnarray}
Similarly, consider the action of $\theta$ on (\ref{rpm}). The corresponding equation reduces to:
\begin{eqnarray} 
&&  2\frac{x}{y}\big( U_1(x)U_2(y)r'_{12}(x/y)U_1(x)^{-1}U_2(y)^{-1} 
-  r'_{12}(x/y) \big) \nonumber\\
&&+ \big[\ xU_1'(x)U_1(x)^{-1} - yU_2'(y)U_2(y)^{-1}\ ,\ r_{12}(x/y)\big] =0.\nonumber
\end{eqnarray}
It is easy to show that these last two equations follow from linear combinations of the derivatives of  (\ref{eqU}) either with respect to $xd/dx$ or with respect to $yd/dy$. It remains to show the involution property. One has:
\begin{eqnarray} 
\theta^2(T^+(x))&=&U(x)(\theta(T^-(1/x)))^tU(x)^{-1} + cxU'(x)U(x)^{-1}\nonumber\\
&=& U(x)(U(1/x)T^+(x)^tU(1/x)^{-1} )^tU(x)^{-1} + cxU'(x)U(x)^{-1}\nonumber\\
&& + c\frac{1}{x}U(x)(U'(1/x)U^{-1}(1/x))^tU(x)^{-1}.\nonumber
\end{eqnarray} 
Then, we use the first equation of (\ref{eqU}) and its derivative with respect to $xd/dx$, which gives $U'(x)=-\epsilon U'(1/x)^t/x^2$. The equation above 
simplifies to $\theta^2(T^+(x))=T^+(x)$. We show similarly $\theta^2(T^-(x))=T^-(x)$ (and $\theta^2(c)=c$), which concludes the proof.
\end{proof}
\vspace{1mm}

 \subsection{Non-standard classical Yang-Baxter equation and the algebra $\cB$ \label{sec:nsFRT}}

The non-standard classical Yang-Baxter equation can be understood as a generalization of the classical Yang-Baxter equation (\ref{eq:CYBE}).
\begin{defn}\label{def:nscybe} The matrix $\overline{r}(x,y)\in End(\CC^N\otimes \CC^N)$ is called a non-standard classical r-matrix  if it is a solution of 
the non-standard classical Yang-Baxter equation
 \begin{equation}\label{eq:nsCYBE}
  [\ \overline{r}_{13}(x_1,x_3)\ , \ \overline{r}_{23}(x_2,x_3)\ ]=[\ \overline{r}_{21}(x_2,x_1)\ , \ \overline{r}_{13}(x_1,x_3)\ ]+[\ \overline{r}_{23}(x_2,x_3)\ , \ \overline{r}_{12}(x_1,x_2)\ ]\;
 \end{equation}
for any $x_1,x_2,x_3$. 
\end{defn}

In the case where the non-standard classical r-matrix 
 depends on only one parameter $\overline{r}_{12}(x,y)=r_{12}(x/y)$
and is skew-symmetric  $\overline{r}_{12}(x,y)=-\overline{r}_{21}(y,x)$
then the non-standard classical Yang-Baxter equation reduces to the standard one (\ref{eq:CYBE}). \vspace{1mm}
 
For a given r-matrix $\overline{r}(x,y)$ that satisfies the non-standard classical Yang-Baxter equation, we now introduce the Lie algebra $\cB$. 
\begin{defn}\label{def:B}  $\cB$ is the Lie algebra with generators 
$\{b_{ij}^{[\ell]} \ |\  \ 1\leq i,j\leq N, \ell\in\ZZ \}$. Introduce:
\begin{equation}
 B(x)= E_{ij}\otimes \sum_{\ell \in\ZZ} b_{ij}^{[\ell]} x^\ell\ .
\end{equation}
The defining relations are:
\begin{equation}\label{eq:Al}
 [\ B_{1}(x)\ , \ B_{2}(y)\ ]=[\ \overline{r}_{21}(y,x)\ , \ B_{1}(x)\ ]+[\ B_{2}(y)\ , \ \overline{r}_{12}(x,y)\ ]\; ,
\end{equation}
where $\overline{r}(x,y)$ satisfies the non-standard classical Yang-Baxter equation (\ref{eq:nsCYBE}).
\end{defn}
Note that the Jacobi identity for $\cB$ is guaranteed by the fact that the classical r-matrix in \eqref{eq:Al} satisfies 
the non-standard classical Yang-Baxter equation (\ref{eq:nsCYBE}). Although this r-matrix is not in general skew-symmetric 
the Lie bracket of $\cB$ is still anticommutative.\vspace{1mm} 

Given an automorphism $\theta$ of $\cT$, a realization of the algebra $\cB$ into a fixed point subalgebra\footnote{Also called twisted algebra, where the twist is the automorphism.} 
of $\cT$
 is now considered. 

%
\begin{prop}\label{propB} Assume $ r_{12}(x)=-r_{12}(1/x)^{t_1t_2}$, $U(x)$ is a solution of (\ref{eqU}) and $\theta$ is defined by \eqref{eq:auU}.
The map $\psi_\theta:\cB \rightarrow \cT$ such that
\begin{equation}\label{eq:Bxg}
 B(x) \mapsto T^+(x) + \theta(T^+(x))=T^+(x) + U(x)\ T^-(1/x)^t\ U(x)^{-1} -c x U'(x) U(x)^{-1}\;,
\end{equation}
with
\begin{equation}\label{eq:rb}
 \overline{r}_{12}(x,y)=r_{12}(x/y)+U_1(x)\ r_{12}^{t_1}(1/(xy))\ U_1(x)^{-1}\;
\end{equation}
is an algebra homomorphism.
\end{prop}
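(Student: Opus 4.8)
The plan is to verify directly that the image $\mathcal{B}(x):=\psi_\theta(B(x))=T^+(x)+V(x)$, where $V(x):=\theta(T^+(x))=U(x)\,T^-(1/x)^t\,U(x)^{-1}-cx\,U'(x)U(x)^{-1}$, satisfies the single defining relation \eqref{eq:Al} of $\cB$ once $\overline{r}$ is specialized to \eqref{eq:rb}; since \eqref{eq:Al} is the only relation of $\cB$, this is precisely the homomorphism property. First I would expand
\[
[\mathcal{B}_1(x),\mathcal{B}_2(y)]=[T_1^+(x),T_2^+(y)]+[V_1(x),V_2(y)]+[T_1^+(x),V_2(y)]+[V_1(x),T_2^+(y)]
\]
and treat the four brackets separately, using only the defining relations \eqref{rpp}, \eqref{rpm} of $\cT$ together with the hypotheses on $r$ and $U$.

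For the two ``diagonal'' brackets, $[T_1^+(x),T_2^+(y)]=[T_1^+(x)+T_2^+(y),r_{12}(x/y)]$ is immediate from \eqref{rpp}. For $[V_1(x),V_2(y)]$ the central matrix pieces $cx\,U'U^{-1}$ drop out (they are central times a $c$-number matrix), and after pulling the $U$'s through one is left with $U_1(x)U_2(y)\,[T_1^-(1/x)^{t_1},T_2^-(1/y)^{t_2}]\,U_1(x)^{-1}U_2(y)^{-1}$. Writing $[A^{t_1},B^{t_2}]=[A,B]^{t_1t_2}$, applying \eqref{rpp} for $T^-$ at argument $(1/x)/(1/y)=y/x$, and using the hypothesis $r_{12}(u)=-r_{12}(1/u)^{t_1t_2}$ in the form $r_{12}^{t_1t_2}(y/x)=-r_{12}(x/y)$, this collapses to $[V_1(x)+V_2(y),r_{12}(x/y)]$; here the intertwining relation of \eqref{eqU} lets $U_1U_2$ commute past $r_{12}(x/y)$, and \eqref{eq:cU} guarantees the leftover central contribution vanishes. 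Thus the diagonal part yields $[\mathcal{B}_1(x)+\mathcal{B}_2(y),r_{12}(x/y)]$, which matches the $r$-portions of the right-hand side of \eqref{eq:Al}: since $r_{21}(y/x)=-r_{12}(x/y)$ by skew-symmetry, $[\overline{r}_{21}(y,x),\mathcal{B}_1]+[\mathcal{B}_2,\overline{r}_{12}(x,y)]$ contributes exactly $[\mathcal{B}_1+\mathcal{B}_2,r_{12}(x/y)]$ from its $r_{12}(x/y)$ and $r_{21}(y/x)$ terms.

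It remains to match the mixed brackets with the $U_1 r_{12}^{t_1}(1/(xy))U_1^{-1}$-parts of $\overline{r}$. For $[T_1^+(x),V_2(y)]$ I would pull out $U_2(y)$, use $[T_1^+,(T_2^-)^{t_2}]=[T_1^+,T_2^-]^{t_2}$ and \eqref{rpm} at argument $x/(1/y)=xy$, and then convert the transposed r-matrix via the consequence $r_{12}^{t_2}(xy)=-r_{12}^{t_1}(1/(xy))$ of the hypothesis on $r$; the analogous treatment of $[V_1(x),T_2^+(y)]$ uses \eqref{rpm} with the two spaces interchanged together with $r_{21}(u)=-r_{12}(1/u)$. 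The non-central part of these two brackets should then reassemble precisely into the contributions of the $U\,r^{t_1}(1/(xy))U^{-1}$-pieces of $\overline{r}_{21}(y,x)$ and $\overline{r}_{12}(x,y)$ to the right-hand side of \eqref{eq:Al}.

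The main obstacle I expect is the bookkeeping of the central, $c$-dependent terms. These arise from three sources: the anomaly $-2c\,r'_{12}(x/y)(x/y)$ in \eqref{rpm}, which enters each mixed bracket as a term of the form $\mp 2c\,U\,(f_{12})^{t}(xy)\,U^{-1}$ in the appropriate space, with $f(u)=u\,r'(u)$; the central pieces $cx\,U'U^{-1}$ hidden inside $V_1,V_2$ when these are conjugated back by the $U$'s; and, on the right-hand side, the commutators of $\overline{r}$ with the central parts of $\mathcal{B}_1,\mathcal{B}_2$. To show these match I would differentiate the defining relations of $U$: the relation $U(x)^t=\epsilon\,U(1/x)$ gives $U'(x)=-\epsilon\,U'(1/x)^t/x^2$ (exactly as used in Proposition \ref{prop:tw}), while differentiating the intertwining relation of \eqref{eqU} controls the commutators of $r_{12}$ and of its derivative with $U'U^{-1}$. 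Combined with the transpose--inversion identity $f_{12}^{t_2}(u)=f_{12}^{t_1}(1/u)$ for the r-matrix derivative, these reduce the total central discrepancy to zero, completing the verification that $\psi_\theta$ respects \eqref{eq:Al}.
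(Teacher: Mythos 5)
Your argument is correct, but it takes a different route from the paper. The paper does not prove Proposition \ref{propB} directly: it observes (Remark \ref{rem:UK}) that any $U(x)$ solving \eqref{eqU} is a particular k-matrix, i.e.\ a solution of the classical reflection equation \eqref{eq:re}, and then invokes the more general Proposition \ref{propB2}, whose proof inserts the ansatz into \eqref{eq:Al}, reads off $\overline{r}$, and kills the residual $c$-coefficient using the reflection equation and its $x\,d/dx$, $y\,d/dy$ derivatives. You instead verify \eqref{eq:Al} directly in the special case, and you are able to use the \emph{stronger} hypotheses \eqref{eqU} that a general k-matrix does not satisfy: the exact intertwining $U_1(x)U_2(y)r_{12}(x/y)=r_{12}(x/y)U_1(x)U_2(y)$ lets you slide $U_1U_2$ through $r_{12}(x/y)$ in the $[V_1,V_2]$ bracket (for a general $k$ this step is replaced by the two rearranged forms of \eqref{eq:re} used in the paper), and you can recycle the identities $U'(x)=-\epsilon U'(1/x)^t/x^2$ and \eqref{eq:cU} already established in the proof of Proposition \ref{prop:tw}. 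What your approach buys is a shorter, more transparent computation for this proposition, organized around the automorphism $\theta$ itself (writing the image as $T^++\theta(T^+)$); what it gives up is generality, since it does not yield Proposition \ref{propB2}, which the paper needs separately for the $\kappa_\pm$ examples and the $sl_2$-invariant Onsager algebra. Your bookkeeping of the central terms (the anomaly of \eqref{rpm} transported to argument $xy$, the $cxU'U^{-1}$ pieces, and the derivative identity $f_{12}^{t_2}(u)=f_{12}^{t_1}(1/u)$) is the right mechanism and parallels the cancellation carried out in the paper's proof of Proposition \ref{propB2}.
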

\begin{proof}
See the proof of the Proposition \ref{propB2} which is more general according to the Remark \ref{rem:UK}.
\end{proof}
In the following, we denote by $\cT^\theta \subset \cT$ the  image of the algebra $\mathcal{B}$ by the homomorphism
$\psi_\theta$.
Let us remark that in the previous proposition, we may have considered $B(x) \mapsto T^-(x)+\theta(T^-(x))$. However, it defines an isomorphic subalgebra.\\

To conclude this Section, we would like to point out that other subalgebras of $\cT$, that are not necessarily fixed point subalgebras, may be considered as well. 
Indeed, along the lines of \cite{Skry}, let us consider a special case of the classical reflection equation \cite[eq. (2)]{Sk1}.
\begin{defn}
The matrix $k(x)\in End(\CC^N)$ is called a k-matrix if it is a solution of the following equation
 \begin{equation}\label{eq:re}
  r_{12}(x/y) k_1(x) k_2(y)-k_1(x) k_2(y)r_{12}(x/y)=k_1(x) r_{12}^{t_2}(xy) k_2(y)- k_2(y)r_{12}^{t_2}(xy)k_1(x)
 \end{equation}
for a given skew-symmetric r-matrix $r(x)$ satisfying $ r_{12}(x)=-r_{12}(1/x)^{t_1t_2}$.
\end{defn}

\begin{rem}\label{rem:UK} The  matrices $U(x)$ solving \eqref{eqU} are examples of k-matrices. 
\end{rem}
We are now in position to give a generalization of Proposition \ref{propB}.

\begin{prop}\label{propB2} For a given k-matrix $k(x)$, the map $\cB \rightarrow \cT$ such that
\begin{equation}\label{eq:Bxgg}
 B(x) \mapsto T^+(x) + k(x)\ T^-(1/x)^t\ k(x)^{-1} -c x k'(x) k(x)^{-1}\;,
\end{equation}
with
\begin{equation}\label{eq:rb2}
 \overline{r}_{12}(x,y)=r_{12}(x/y)+k_1(x)\ r_{12}^{t_1}(1/(xy))\ k_1(x)^{-1}\;
\end{equation}
is an algebra homomorphism.
\end{prop}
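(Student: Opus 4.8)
The plan is to verify directly that the map \eqref{eq:Bxgg} sends the defining relation \eqref{eq:Al} of $\cB$ to a consequence of the relations \eqref{Tg}--\eqref{rpm} of $\cT$, with $\overline{r}$ given by \eqref{eq:rb2}. Write $\Phi(B(x)) = T^+(x) + k(x)T^-(1/x)^t k(x)^{-1} - cx\,k'(x)k(x)^{-1}$ and denote the conjugated/transposed piece by $\widetilde{T}(x) := k(x)T^-(1/x)^t k(x)^{-1}$, so that $\Phi(B(x)) = T^+(x) + \widetilde{T}(x) - cx\,k'(x)k(x)^{-1}$. The strategy is to compute the bracket $[\Phi(B_1(x)),\Phi(B_2(y))]$ in $\cT$ by expanding it into four blocks: the $[T^+_1,T^+_2]$ block, the two cross blocks $[T^+_1,\widetilde{T}_2]$ and $[\widetilde{T}_1,T^+_2]$, and the $[\widetilde{T}_1,\widetilde{T}_2]$ block (the central scalar pieces commute with everything and drop out of the bracket, though they will re-enter through the $-2c\,r'_{12}$ anomaly in \eqref{rpm}). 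Each block is rewritten using \eqref{rpp}, \eqref{rpm}, and the transposed/conjugated images of these relations.

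First I would handle the two pure blocks. The $[T^+_1(x),T^+_2(y)]$ block is immediately $[T^+_1(x)+T^+_2(y),r_{12}(x/y)]$ by \eqref{rpp}. For the $\widetilde{T}$--$\widetilde{T}$ block, I would transpose relation \eqref{rpp} for $T^-$ in both spaces, conjugate by $k_1(x)k_2(y)$, and use the $1/x,1/y$ arguments; the key point is that $r_{21}(1/x \,/\, 1/y)^{t_1 t_2} = r_{21}(y/x)^{t_1 t_2}$, which by the symmetry hypotheses $r_{12}(x)=-r_{12}(1/x)^{t_1 t_2}$ and skew-symmetry relates back to $r_{12}(x/y)$. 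The two cross blocks are where \eqref{rpm} enters and produce the $-2c\,r'_{12}(x/y)x/y$ anomaly terms; these must be shown to cancel exactly against the contributions coming from bracketing $T^\pm$ against the central scalar terms $-cy\,k'(y)k(y)^{-1}$ and $-cx\,k'(x)k(x)^{-1}$ and against the derivative terms generated in the $\widetilde{T}$ block, exactly as in the involution computation at the end of the proof of Proposition \ref{prop:tw}.

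The decisive structural input is the reflection equation \eqref{eq:re}. After collecting all four blocks, the terms that do \emph{not} organize themselves into the form $[\overline{r}_{21}(y,x),\Phi(B_1(x))] + [\Phi(B_2(y)),\overline{r}_{12}(x,y)]$ are precisely the $k$-matrix obstruction terms, i.e. the commutators of $r_{12}(x/y)$ and $r_{12}^{t_2}(xy)$ with the products $k_1(x)k_2(y)$. Substituting $\overline{r}$ from \eqref{eq:rb2} into the target right-hand side and expanding, one sees that the difference between what the bracket produces and what the target demands is exactly the difference of the two sides of \eqref{eq:re} (conjugated appropriately), hence vanishes because $k$ is a k-matrix. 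I would make the bookkeeping uniform by consistently pushing the $k$-conjugation through each commutator using the conjugation identity $[k A k^{-1}, M] = k\,[A,\,k^{-1}Mk]\,k^{-1}$, so that every term lives in the same frame before comparison.

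The main obstacle I anticipate is the cross-block computation: matching the noncommutative rearrangements so that \eqref{eq:re} appears in exactly the combination of transposes ($t_1$ versus $t_2$) and argument inversions ($xy$ versus $1/(xy)$) that occurs after conjugating by $k_1(x)$, and simultaneously confirming that the two central anomaly terms $-2c\,r'_{12}$ cancel. This is the same double check — the algebraic anomaly cancellation plus the reflection identity — that makes Proposition \ref{propB} a special case via Remark \ref{rem:UK}, since setting $k(x)=U(x)$ turns \eqref{eq:re} into the commutation condition in \eqref{eqU} and turns $r_{12}^{t_2}(xy)$ into the term $U_1(x)r_{12}^{t_1}(1/(xy))U_1(x)^{-1}$ of \eqref{eq:rb}. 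Establishing the k-matrix case therefore establishes both propositions at once, which is why I would organize the proof around \eqref{eq:re} from the outset rather than around the more restrictive hypotheses \eqref{eqU}.
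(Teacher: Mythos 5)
Your proposal follows essentially the same route as the paper: substitute \eqref{eq:Bxgg} into \eqref{eq:Al}, use \eqref{rpp}--\eqref{rpm} blockwise to identify $\overline{r}$ as \eqref{eq:rb2}, and reduce the leftover obstruction --- in particular the coefficient of the central element coming from the $-2c\,r'_{12}$ anomaly and the $-cx\,k'(x)k(x)^{-1}$ terms --- to the reflection equation \eqref{eq:re} together with its derivatives with respect to $x\,d/dx$ and $y\,d/dy$. The only point worth making explicit when you carry this out is that the vanishing of the central coefficient genuinely requires those \emph{derivatives} of \eqref{eq:re}, not just \eqref{eq:re} itself, exactly as in the automorphism check of Proposition \ref{prop:tw}.
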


\begin{proof} Inserting (\ref{eq:Bxgg}) into (\ref{eq:Al}) and using (\ref{rpp}), (\ref{rpm}), one identifies $\overline{r}_{12}(x,y)$ as (\ref{eq:rb2}). Also, one finds a term in $c$ with coefficient:
\begin{eqnarray} 
  2xy\left(k_1(x)r_{12}'^{t_2}(xy)k_2(y)-k_2(y)r_{12}'^{t_2}(xy)k_1(x)\right) &-& x \big[  r_{12}(x/y) + k_2(y)r_{12}^{t_2}(xy)k_2(y)^{-1}  \ , \  k'_1(x)k_1(x)^{-1}\big]k_1(x)k_2(y)\nonumber\\
&+& y \big[ k'_2(x)k_2(x)^{-1}\  ,\  r_{12}(x/y) - k_1(x)r_{12}^{t_2}(xy)k_1(y)^{-1}  \big]k_1(x)k_2(y) .\label{eq:2xy}
\end{eqnarray} 
To show that this coefficient vanishs, we use the reflection equation (\ref{eq:re}) from which we deduce:
\begin{eqnarray} 
&& k_1(x)^{-1}r_{12}(x/y)k_1(x)k_2(y) + k_1(x)^{-1}k_2(y) r_{12}^{t_2}(xy) k_1(x)= r_{12}^{t_2}(xy)k_2(y) + k_2(y)r_{12}(x/y),
\nonumber\\
&& k_2(y)^{-1}r_{12}(x/y)k_1(x)k_2(y) - k_1(x)k_2(y)^{-1} r_{12}^{t_2}(xy) k_2(y)= k_1(x)r_{12}(x/y) - r_{12}^{t_2}(xy)k_1(x).\nonumber
\end{eqnarray} 
Using these two equations, the condition on the coefficient (\ref{eq:2xy})  yields to the constraint:
\begin{eqnarray} 
 [r_{12}(x/y), xk'_1(x)k_2(y) + yk_1(x)k'_2(y)] &-& xk'_1(x)r_{12}^{t_2}(xy)k_2(y)  -yk_1(x)r_{12}^{t_2}(xy)k'_2(y) \nonumber\\
&+&  xk_2(y)r_{12}^{t_2}(xy)k'_1(x) + yk'_2(y)r_{12}^{t_2}(xy)k_1(x)\nonumber\\
&-&  2xy\left(k_1(x)r_{12}'^{t_2}(xy)k_2(y) -k_2(y)r_{12}'^{t_2}(xy)k_1(x)\right) =0.\nonumber 
\end{eqnarray}
Taking linear combinations of the derivatives of  (\ref{eq:re}) either with respect to $xd/dx$ or with respect to $yd/dy$, one finds
that this constraint is satisfied.
\end{proof}

Note that the r-matrix defined by \eqref{eq:rb2} is a solution of the non-standard classical Yang-Baxter equation \eqref{eq:nsCYBE}, as shown in \cite{Skryybe}.\vspace{1mm}

The homomorphic image of $\cB$ defined in Proposition \ref{propB2} is a Lie subalgebra of $\cT$ that is not necessarily a fixed point subalgebra. It is denoted by $\cT^k$.\vspace{1mm}

\subsection{Commutative subalgebras\label{sec:CS}}
Two commutative subalgebras of $\cB$ are now identified. The first subalgebra has already appeared in the literature. The second  subalgebra, which is new, 
will be of interest in the analysis of further sections.

The following proposition has been proven in \cite{Jur,hwa,Hik} to prove the integrability of the Gaudin models \cite{gau} and has been generalized in \cite{Tal}.
\begin{prop}\label{pr:a1}
 The following generating function in the universal enveloping algebra of $\cB$
 \begin{equation}\label{tB2}
  t(x)= tr \left(B(x)^2\right)=\sum_{\ell,p \in \ZZ}\ b_{ij}^{[\ell]}b_{ji}^{[p]}\  x^{\ell+p}\;
 \end{equation}
 satisfies
 \begin{equation}
  [ t(x)\ ,\ t(y) ]=0\;.
 \end{equation}
\end{prop}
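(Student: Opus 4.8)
The plan is to prove commutativity of the transfer-like quantities $t(x)=\operatorname{tr}(B(x)^2)$ by a direct computation of $[t(x),t(y)]$ using the defining relation \eqref{eq:Al} of $\cB$, together with the non-standard classical Yang-Baxter equation \eqref{eq:nsCYBE}. This is the classical analogue of the standard argument that $\operatorname{tr}(T(x)^k)$ generate commuting quantities in the quantum/FRT setting, so I expect the mechanism to mirror that computation with commutators replacing the usual conjugations.

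First I would set up notation in tensor space. Writing $t(x)=\operatorname{tr}_1(B_1(x)^2)$ and $t(y)=\operatorname{tr}_2(B_2(y)^2)$, I compute
\begin{eqnarray}
[t(x),t(y)]=\operatorname{tr}_{12}\big[\,B_1(x)^2\,,\,B_2(y)^2\,\big]\,,\nonumber
\end{eqnarray}
and then expand the inner commutator by the Leibniz rule into a sum of four terms, each of the form $B_1(x)^a\,[B_1(x),B_2(y)]\,B_1(x)^{b}\,B_2(y)^{c}\cdots$ with the single basic commutator $[B_1(x),B_2(y)]$ inserted once. Into each such occurrence I substitute the right-hand side of \eqref{eq:Al}, namely $[\overline r_{21}(y,x),B_1(x)]+[B_2(y),\overline r_{12}(x,y)]$. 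This reduces everything to a sum of traces of products of $B_1(x)$, $B_2(y)$ and the two $c$-number matrices $\overline r_{12}(x,y)$, $\overline r_{21}(y,x)$.

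The key step is then to reorganize the resulting expression so that all the $\overline r$-dependence collapses. The standard device is to move the $\overline r$ factors through the $B$'s by repeatedly re-applying \eqref{eq:Al} in reverse, and crucially to use the cyclicity of the trace $\operatorname{tr}_{12}$ to shift factors; many terms will pair off and cancel, while the genuinely obstructing remainder is precisely a contraction of the combination
\begin{eqnarray}
[\overline r_{13},\overline r_{23}]-[\overline r_{21},\overline r_{13}]-[\overline r_{23},\overline r_{12}]\nonumber
\end{eqnarray}
against the generators, which vanishes by the non-standard classical Yang-Baxter equation \eqref{eq:nsCYBE}. I would therefore aim to bring $[t(x),t(y)]$ into a form in which \eqref{eq:nsCYBE} can be applied verbatim, so that the whole expression is manifestly zero after taking the trace.

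The main obstacle I anticipate is bookkeeping: since $\overline r(x,y)$ is \emph{not} skew-symmetric, the two terms $\overline r_{21}(y,x)$ and $\overline r_{12}(x,y)$ in \eqref{eq:Al} are genuinely different, so the usual symmetry shortcuts are unavailable and one must track carefully which index-slot each $\overline r$ acts in and in which order the factors sit before invoking trace cyclicity. Getting the cancellations to line up exactly against the three-term structure of \eqref{eq:nsCYBE}, rather than the skew-symmetric two-term classical Yang-Baxter equation, is where care is needed; I would handle this by introducing a third auxiliary space only implicitly and checking that every surviving term is a trace of a commutator (hence zero by cyclicity) or is precisely the \eqref{eq:nsCYBE} combination. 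Once the purely matrix identity is matched, the proof closes with no further input.
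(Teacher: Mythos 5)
First, a point of comparison: the paper does not actually prove Proposition \ref{pr:a1} --- it defers entirely to the literature (\cite{Jur,hwa,Hik}, generalized in \cite{Tal}). So your direct-computation strategy (expand $[\,\mathrm{tr}_1 B_1(x)^2,\mathrm{tr}_2 B_2(y)^2\,]$ by Leibniz, substitute \eqref{eq:Al}, reorganize under the trace) is the right general idea and is indeed what those references do; there is no ``paper proof'' for it to diverge from.

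That said, two steps in your sketch are asserted rather than secured, and both are exactly where the difficulty of the statement lives. First, you invoke ``cyclicity of the trace $\mathrm{tr}_{12}$ to shift factors.'' Cyclicity fails for matrices whose entries are non-commuting elements of $U(\cB)$: $\mathrm{tr}(XY)=\sum X_{IK}Y_{KI}\neq \sum Y_{IK}X_{KI}=\mathrm{tr}(YX)$ in general. Only the $\overline{r}$-matrices, which have scalar entries, may be cycled freely; moving $B_1(x)$ past $B_2(y)$ under the trace costs a further commutator, which must itself be re-expanded via \eqref{eq:Al}. This is precisely why the Lie-algebra (``quantum Gaudin'') case is harder than the Poisson case, where cyclicity does hold and the cancellation is immediate with no Yang--Baxter input at all; a naive application of cyclicity to the $B$'s would produce spurious cancellations. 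Second, your claim that the surviving obstruction is ``precisely'' the combination in \eqref{eq:nsCYBE}, to be ``applied verbatim,'' needs justification: after the second application of \eqref{eq:Al} the residual terms are quadratic in $\overline{r}_{12}(x,y)$ and $\overline{r}_{21}(y,x)$, living in \emph{two} auxiliary spaces with \emph{two} spectral parameters, whereas \eqref{eq:nsCYBE} is a three-space, three-parameter identity; identifying two of its spaces or parameters runs into the pole of $\overline{r}$ at coinciding arguments, so the identity cannot be contracted down ``verbatim.'' Explaining how the third space actually arises (or, as in \cite{Tal} and in Skrypnyk's treatment of the non-skew-symmetric case, how the symmetric part $\overline{r}_{12}(x,y)+\overline{r}_{21}(y,x)$ and the operator ordering conspire) is the missing core of the argument, not bookkeeping. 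As it stands the proposal is a plausible plan with the right ingredients, but the proof is not closed.
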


\begin{prop}\label{pr:a2} Let  $\overline{r}(x,y)$ be a solution of the non-standard classical Yang-Baxter equation (\ref{eq:nsCYBE}).
 Let $M(x)\in End(\CC^N)$ be a solution of the following equation
 \begin{equation}\label{eq:reD}
  [tr_1 ( \overline{r}_{12}(x,y) M_1(x) ) \ ,\ M_2(y) ]=0\; .
 \end{equation}
Then,
\begin{equation}\label{tb2}
 b(x)=tr M(x) B(x)
\end{equation}
satisfies
 \begin{equation}
  [ b(x)\ ,\ b(y) ]=0\;.
 \end{equation}
\end{prop}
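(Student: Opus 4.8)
The plan is to compute the bracket $[b(x),b(y)]$ directly from the defining relation \eqref{eq:Al} of $\cB$ and show that the condition \eqref{eq:reD} forces it to vanish. Writing $b(x)=\operatorname{tr} M(x)B(x) = M(x)_{ji}\,b_{ij}(x)$ where $b_{ij}(x)=\sum_\ell b_{ij}^{[\ell]}x^\ell$, I would first express the product $b(x)b(y)$ as a double trace $\operatorname{tr}_1\operatorname{tr}_2\big(M_1(x)M_2(y)\,B_1(x)B_2(y)\big)$, so that
\begin{equation}
[b(x),b(y)]=\operatorname{tr}_1\operatorname{tr}_2\big(M_1(x)M_2(y)\,[B_1(x),B_2(y)]\big).\nonumber
\end{equation}
The scalars $M(x)$, $M(y)$ commute with everything in the algebra, so they can be pulled outside the bracket; only the commutator of the $\cB$-generators is nontrivial.

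Next I would substitute the defining relation \eqref{eq:Al}, namely $[B_1(x),B_2(y)]=[\overline r_{21}(y,x),B_1(x)]+[B_2(y),\overline r_{12}(x,y)]$, into the expression above. This yields two terms, each a double trace of a product of the $c$-number matrices $M_1(x)M_2(y)$, $\overline r$, and a single copy of $B$. The strategy is to use cyclicity of the trace in the appropriate tensor factor to move $M_1(x)M_2(y)$ around each commutator and collapse one of the two traces against the $\overline r$-matrix, thereby producing exactly the combination $\operatorname{tr}_1(\overline r_{12}(x,y)M_1(x))$ (or its partner with indices swapped) sitting inside the remaining trace alongside $B_2(y)$ and $M_2(y)$. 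Concretely, the first term should reorganize into $\operatorname{tr}_2\big(M_2(y)\,[\operatorname{tr}_1(\overline r_{21}(y,x)M_1(x)),B_2(y)]\big)$, and the second term into an analogous expression.

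The main obstacle, and the step requiring the most care, is the bookkeeping of which tensor factor the partial trace acts on and in what order matrices are conjugated under the cyclic shift. One must verify that after these manipulations the coefficient of $B_2(y)$ in each surviving commutator is precisely the quantity $\operatorname{tr}_1(\overline r_{12}(x,y)M_1(x))$ appearing in hypothesis \eqref{eq:reD} — up to the index relabeling $\overline r_{12}\leftrightarrow \overline r_{21}$ and $x\leftrightarrow y$ governing the two terms — so that \eqref{eq:reD} directly annihilates the bracket with $M_2(y)$. A subtlety here is that $\overline r$ is not assumed skew-symmetric, so the two contributions cannot simply be combined by a symmetry argument; I would treat them separately and show each vanishes (or that their sum vanishes) as a consequence of \eqref{eq:reD} applied once with the roles of the two spaces as written and once with $x,y$ interchanged.

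Once the bracket is written as a sum of terms each of the form $\operatorname{tr}_2\big(M_2(y)\,[\,\cdot\,,B_2(y)]\big)$ with $\cdot$ equal to the matrix in \eqref{eq:reD}, the commutativity $[b(x),b(y)]=0$ follows immediately. I expect no convergence or formal-series issues beyond the usual manipulation of generating functions, since the whole computation is algebraic in the coefficients $b_{ij}^{[\ell]}$ and the matrix entries of $M$ and $\overline r$; the sums over $\ell$ are carried along inertly inside $B(x)$.
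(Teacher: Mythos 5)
Your proposal is correct and follows essentially the same route as the paper: multiply the defining relation \eqref{eq:Al} by $M_1(x)M_2(y)$, take $\operatorname{tr}_1\operatorname{tr}_2$, use cyclicity to collapse one partial trace against $\overline r$, and kill each of the two resulting terms with \eqref{eq:reD} (one as written, one with the spaces relabelled and $x\leftrightarrow y$). The only slip is in your displayed form of the first term, which actually involves $B_1(x)$ with the partial trace taken over space $2$, i.e.\ $\operatorname{tr}_1\bigl(\bigl[M_1(x),\operatorname{tr}_2\bigl(M_2(y)\,\overline r_{21}(y,x)\bigr)\bigr]B_1(x)\bigr)$, rather than a commutator with $B_2(y)$ — consistent with your own remark that this term requires \eqref{eq:reD} with the roles of the two spaces and of $x,y$ interchanged.
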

\begin{proof}
Multiply \eqref{eq:Al} on the left by $M_1(x)M_2(y)$  and take the traces in the spaces 1 and 2 to get
\begin{eqnarray}
 [ b(x)\ ,\ b(y) ]&=&tr_{12}M_1(x)M_2(y)\big( \overline{r}_{21}(y,x) B_{1}(x)-B_{1}(x)\overline{r}_{21}(y,x) \big)\nonumber\\
 &&+ tr_{12}M_1(x)M_2(y)\big(B_{2}(y)\overline{r}_{12}(x,y)-\overline{r}_{12}(x,y)B_{2}(y)\big)\nonumber\\
 &=& tr_{1}M_1(x)tr_2\Big(M_2(y) \overline{r}_{21}(y,x)\Big) B_{1}(x)-tr_{1}tr_2\Big(\overline{r}_{21}(y,x)M_2(y)\Big)M_1(x)B_{1}(x) \big)\nonumber\\
 &&+tr_2 tr_{1}\Big(\overline{r}_{12}(x,y)M_1(x)\Big) M_2(y)B_{2}(y)-tr_{2}M_2(y)tr_1\Big(M_1(x)\overline{r}_{12}(x,y)\Big)B_{2}(y)\nonumber\\
 &=& tr_{1}\Big[M_1(x),tr_2\Big(M_2(y) \overline{r}_{21}(y,x)\Big)\Big] B_{1}(x)+tr_2 \Big[tr_{1}\Big(\overline{r}_{12}(x,y)M_1(x)\Big), M_2(y)\Big]B_{2}(y)\;.\nonumber
\end{eqnarray}
The cyclicity of the traces is used to prove the previous relations. Then, by using \eqref{eq:reD}, one concludes the proof.
\end{proof}

\begin{rem} Note that $t(x)$ and $b(y)$ do not necessarily commute.
\end{rem}

\vspace{1mm}
\section{FRT presentation of fixed point subalgebras of  $\widehat{sl_2}$ }
In this section, based on one of the simplest example of classical Yang-Baxter algebra $\cT$  associated with  $\widehat{sl_2}$,
we present four different explicit examples of non-standard  classical Yang-Baxter algebra $\cB$. 
Firstly, the Serre-Chevalley and Cartan-Weyl presentations of the affine Kac-Moody algebra $\widehat{sl_2}$ are recalled. 
Secondly, the FRT presentation of  $\widehat{sl_2}$  is given  and the automorphisms of Proposition \ref{prop:tw} are described, see Corollary \ref{cor:twloop}. Finally, using  Propositions \ref{propB} and \ref{propB2} we  obtain the FRT presentation for four different homomorphic images of $\cB$. They are identified with certain fixed point subalgebras of $\widehat{sl_2}$.

\subsection{Presentations of $\widehat{sl_2}$  \label{sec:L2}}

The affine Kac-Moody algebra $\widehat{sl_2}$  has been studied extensively in the mathematics and physics literature  \cite{Kac,GO} (see also \cite{Lec}), 
where two presentations are usually considered. The so-called Serre-Chevalley presentation is generated by $\{x^+_i,x^-_i ,k_i|i=0,1\}$ subject 
to the commutation relations 
\ben
&&\null[k_i,x^\pm_j]=\pm a_{ij} x^\pm_j, \quad [k_i,k_j]= 0, \quad [x^+_i,x^-_j]=k_{j}\delta_{i+j},\\
&&\null[x^\pm_i,[x^\pm_i,[x^\pm_i,x^\pm_j]]]=0 \quad \mbox{for}\quad i\neq j,
\een
where $\delta_i=0$ if $i\neq 0$ and $\delta_0=1$. 
The element $c=k_0+k_1$ is central and the entries of the Cartan matrix are given by $a_{00}=a_{11}=2$ and $a_{01}=a_{10}=-2$. Alternatively, 
the Cartan-Weyl presentation of $\widehat{sl_2}$  is generated by $\{e_n,f_n ,h_n,c|n \in \ZZ\}$  subject to the commutation relations
\begin{alignat}{4}
 &[e_n,e_m]=[f_n,f_m]=0\ ,\label{CW1}\\
&\null[h_n,e_m]=2 e_{n+m}\ , \qquad && [h_n,f_m]=-2 f_{n+m},\label{CW2}\\
&\null[h_n,h_m]= 2\,c\,n\,\delta_{n+m}\ , \quad &&[e_n,f_m]=h_{n+m}+c\, n\, \delta_{n+m},   \label{CW3}
\end{alignat}
where $c$ is central.  Recall that the isomorphism from the Serre-Chevalley to the Cartan-Weyl presentation is given, up to automorphism, by (see e.g. \cite{Lec}):
$k_1 \mapsto h_0,\quad  x^+_1  \mapsto e_0, \quad x^-_1  \mapsto f_0,\quad
k_0  \mapsto -c-h_0, \quad x^+_0   \mapsto f_{-1}, \quad x^-_0  \mapsto e_{1},\quad
c\mapsto - c.$\vspace{1mm}

We give now an equivalent presentation of $\widehat{sl_2}$  using the generic construction exposed in Section \ref{sec:FRT}.
In this goal, let us introduce the following classical  (traceless) r-matrix for $x\neq 1$ associated with the affine Kac-Moody algebra $\widehat{sl_2}$:
\begin{equation}\label{def:r}
 r(x)=\frac{1}{x-1}\begin{pmatrix}
       -\frac{1}{2}(x+1)&0&0&0\\
       0&\frac{1}{2}(x+1)& -2&0\\
       0&-2x& \frac{1}{2}(x+1) &0\\
        0&0&0&-\frac{1}{2}(x+1)
      \end{pmatrix}
\end{equation}
which satisfies the classical Yang-Baxter equation \eqref{eq:CYBE} and the following relations
\begin{equation}
 r_{12}(x)=-r_{21}(1/x)=-r_{12}(1/x)^{t_1t_2}\;.
\end{equation}
%
%

The affine Kac-Moody algebra $\widehat{sl_2}$ admits a third presentation, called an FRT presentation, using the results of the previous section\footnote{We expect this presentation appears in the literature, although we could not find a reference.}. Namely, by defining:
\begin{eqnarray}
 T^{+}(x)&=&   \begin{pmatrix}
          h_0/2 &2 f_0 \\
       0 &   -h_0/2
                   \end{pmatrix}   + \sum_{n\geq 1}x^n\begin{pmatrix}
           h_{n} &2f_{n} \\
        2 e_{n}&   -h_{n}
                   \end{pmatrix},\label{Tp}
                   \\
 T^{-}(x)&=& \begin{pmatrix}
         - h_0/2 & 0 \\
       -2e_0 &   h_0/2
                   \end{pmatrix}   + \sum_{n\geq 1}x^{-n} \begin{pmatrix}
       -  h_{- n} & -2f_{-n}\\
        -2 e_{-n} &   h_{- n} 
                   \end{pmatrix},\label{Tm}
\end{eqnarray}
the relations given in Definition \ref{def:T} are equivalent to the relations \eqref{CW1}-\eqref{CW3}. 
Let us remark that the traces of $T^\pm(x)$ vanish.

\subsection{Automorphisms of $\widehat{sl_2}$ and fixed point subalgebras}

We now describe explicitly the automorphisms obtained from Proposition \ref{prop:tw} in the special case of $\widehat{sl_2}$. 
\begin{prop}\label{pr:U} Let the r-matrix be defined by (\ref{def:r}). The only solutions of the equations (\ref{eqU}) are given by:
\begin{equation}
\label{solUg}
U=\begin{pmatrix}
       k&0\\
       0&-k^*
      \end{pmatrix}
 \quad \mbox{for $\epsilon=1$} \qquad \mbox{or}\qquad U(x)=\begin{pmatrix}
      0 &1/\sqrt{x}\\
       \pm\sqrt{x}&0
      \end{pmatrix} 
\;   \quad \mbox{for $\epsilon=\pm 1$},
\end{equation}
where $k,k^*$ are non-zero scalar parameters.
\end{prop}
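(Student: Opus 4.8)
The plan is to solve the matrix equations \eqref{eqU} directly by brute force, exploiting the very explicit form of the r-matrix \eqref{def:r}. Write $U(x)=\begin{pmatrix} a(x)&b(x)\\ c(x)&d(x)\end{pmatrix}$ with unknown scalar functions $a,b,c,d$. First I would impose the \emph{commutation condition} $U_1(x)U_2(y)\,r_{12}(x/y)=r_{12}(x/y)\,U_1(x)U_2(y)$, since the r-matrix is block-diagonal in a convenient ordering and its only nontrivial action couples the $E_{12}\otimes E_{21}$ and $E_{21}\otimes E_{12}$ sectors (the off-diagonal entries $-2$ and $-2x$). Writing out $U\otimes U$ as a $4\times4$ matrix and demanding that it commute with \eqref{def:r} yields a system of polynomial constraints on the products of entries of $U$.

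Concretely, I expect the commutation relation to force the off-diagonal part of $U$ to be either entirely zero or entirely nonzero, splitting the analysis into two cases. In the \emph{diagonal case} $b=c=0$, the commutation with the central $2\times2$ block (the one with $\tfrac12(x+1)$ on the diagonal and the $-2,-2x$ off-diagonal) demands that the ratios of $a$ and $d$ be constant and actually that $a,d$ be constants (the $x$-dependence in the $-2x$ entry has to be absorbed, but the diagonal ansatz leaves no room for it, pinning $a=k$, $d=-k^*$ up to the sign convention). In the \emph{anti-diagonal case} $a=d=0$, matching the explicit $x$-dependence of the $-2$ versus $-2x$ entries forces $b(x)\,c(x)$ to carry a compensating factor, which is precisely what produces $b(x)=1/\sqrt{x}$ and $c(x)=\pm\sqrt{x}$.

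Second I would feed each surviving family into the \emph{transpose/inversion condition} $U(x)^t=\epsilon\,U(1/x)$. For the diagonal solution this is automatic with $\epsilon=+1$ and leaves $k,k^*$ free nonzero scalars. For the anti-diagonal solution, $U(x)^t=\begin{pmatrix}0&c(x)\\ b(x)&0\end{pmatrix}$ must equal $\epsilon\,U(1/x)=\epsilon\begin{pmatrix}0&\sqrt{x}\\ \pm 1/\sqrt{x}&0\end{pmatrix}$, and comparing entries fixes $\epsilon=\pm1$ in exact correlation with the sign $\pm\sqrt{x}$ in the lower-left entry, giving the stated result. Throughout, invertibility of $U(x)$ rules out the degenerate mixed cases where only one off-diagonal entry survives.

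The main obstacle will be organizing the commutation condition cleanly: it is a single $4\times4$ matrix identity with $x$-dependent entries, and one must be careful to track the two scale parameters $x,y$ (the r-matrix argument is $x/y$) so as not to lose solutions or spuriously introduce $x$-dependence in the diagonal case. The cleanest route is to observe that $r(x/y)$ has a constant (scale-independent) part plus a term proportional to the permutation-like off-diagonal block, and to argue that commutation with the constant part already forces $U$ to be diagonal or anti-diagonal, after which the residual scale-dependent term pins down the remaining functional freedom. Once the case split is made, the rest is routine comparison of matrix entries.
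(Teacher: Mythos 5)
Your approach is the same as the paper's: the paper's proof is literally ``by direct computation,'' and writing $U(x)=\left(\begin{smallmatrix}a&b\\ c&d\end{smallmatrix}\right)$, using the commutation condition to force $U$ diagonal or anti-diagonal (with invertibility excluding mixed cases), and then imposing $U(x)^t=\epsilon U(1/x)$ is exactly that computation. One step of your argument is overclaimed, however: in the diagonal case the commutation condition only gives $a(x)d(y)=d(x)a(y)$, i.e.\ it fixes the \emph{ratio} $a/d$ to be constant, not $a$ and $d$ themselves; there is no ``$x$-dependence from the $-2x$ entry'' left to absorb, since a diagonal $U_1(x)U_2(y)$ commutes with both off-diagonal blocks once $a(x)d(y)=d(x)a(y)$. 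Consequently an overall scalar factor $\eta(x)$ with $\eta(1/x)=\epsilon\,\eta(x)$ (and similarly $c(x)=\pm x\,b(x)$ with $b(x)=\eta(x)/\sqrt{x}$, $\eta(1/x)=\eta(x)$, in the anti-diagonal case) survives \emph{both} equations of \eqref{eqU}; the classification \eqref{solUg} is therefore really up to such a scalar function, exactly as the paper makes explicit for the k-matrix in \eqref{eq:k2}. This is a defect shared by the proposition as stated rather than an error in your final answer, but your justification of constancy as written would not hold up; the correct statement is that the commutation condition determines $U$ up to an overall function of $x$, which the transpose condition then constrains but does not eliminate.
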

\begin{proof} By direct computation.
\end{proof}

In the following, we study the automorphisms of $\widehat{sl_2}$ denoted by $\theta_1$ (resp. $\theta_2$) 
obtained from \eqref{eq:auU} with the particular solution for (\ref{eqU})
\begin{equation}\label{solU}
U=\begin{pmatrix}
       1&0\\
       0&-1
      \end{pmatrix}
 \quad \text{(resp. } U(x)=\begin{pmatrix}
      0 &1/\sqrt{x}\\
       -\sqrt{x}&0
      \end{pmatrix}).
\end{equation}
We restrict ourselves to these particular solutions since the subalgebras obtained 
for the more general cases of $U(x)$ \eqref{solUg} are isomorphic to the ones we get from \eqref{solU}. 
\vspace{1mm}

 The following corollary is deduced from
 the explicit form of $T^\pm(x)$ given in \eqref{Tp}-\eqref{Tm} and relation \eqref{eq:auU}:
\begin{cor}\label{cor:twloop} The action of the automorphisms $\theta_1,\theta_2$ on the generators 
in the Cartan-Weyl basis of $\widehat{sl_2}$ is such that:
\begin{alignat}{10}
&\theta_1(e_n)=f_{-n},\qquad &&\theta_1(f_n)=e_{-n},\qquad&& \theta_1(h_n)=-h_{-n}\quad&& \text{and}\qquad &&\theta_1(c)=-c\;,\label{eq:theta1}\\
& \theta_2(e_n)= e_{-n+1},\qquad && \theta_2(f_n)= f_{-n-1},\qquad &&\theta_2(h_n)=h_{-n} + c\delta_{n}\quad && \text{and}\qquad&&\theta_2(c)=-c\;.\label{eq:theta2}
\end{alignat}
\end{cor}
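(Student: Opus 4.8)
The plan is to take the defining formula \eqref{eq:auU} for the automorphism, specialized through \eqref{solU} to $\theta_1$ with the constant matrix $U=\mathrm{diag}(1,-1)$ and to $\theta_2$ with $U(x)=\bigl(\begin{smallmatrix}0 & x^{-1/2}\\ -x^{1/2} & 0\end{smallmatrix}\bigr)$, and to evaluate the right-hand side $U(x)\,T^\mp(1/x)^t\,U(x)^{-1}\mp c\,x\,U'(x)U(x)^{-1}$ entry by entry. Expanding the resulting $2\times2$ matrix as a Laurent series in $x$ and comparing each coefficient with the explicit identification \eqref{Tp}-\eqref{Tm} of the matrix entries of $T^\pm(x)$ with the Cartan--Weyl generators, one reads off the image of each generator. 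This comparison is unambiguous because the entries of $T^+(x)$ and $T^-(x)$ are, taken together, in bijection with $\{e_n,f_n,h_n\mid n\in\ZZ\}$, the only overlap being the consistent occurrence of $h_0$ in both series.

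For $\theta_1$ I would first note that $U$ is constant, so $U'(x)=0$ and the central term drops out, leaving $T^\pm(x)\mapsto U\,T^\mp(1/x)^t\,U^{-1}$. A one-line computation shows that the net operation $M\mapsto U M^t U^{-1}$ sends $\bigl(\begin{smallmatrix}a&b\\c&d\end{smallmatrix}\bigr)$ to $\bigl(\begin{smallmatrix}a&-c\\-b&d\end{smallmatrix}\bigr)$; combined with the fact that $T^-(1/x)$ re-expresses the negative-mode generators as positive powers of $x$, matching coefficients of $x^n$ against the $T^+$ template immediately yields $\theta_1(e_n)=f_{-n}$, $\theta_1(f_n)=e_{-n}$ and $\theta_1(h_n)=-h_{-n}$, while $\theta_1(c)=-c$ is read directly from \eqref{eq:auU}. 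The companion computation of $\theta_1(T^-(x))$ supplies the remaining modes in agreement with the same formulas.

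For $\theta_2$ I would record $U(x)^{-1}=\bigl(\begin{smallmatrix}0 & -x^{-1/2}\\ x^{1/2} & 0\end{smallmatrix}\bigr)$ and compute $x\,U'(x)U(x)^{-1}=\tfrac12\,\mathrm{diag}(-1,1)$, so that the central term contributes $+\tfrac{c}{2}\,\mathrm{diag}(1,-1)$ to $\theta_2(T^+(x))$; this is precisely the origin of the shift $\theta_2(h_0)=h_0+c$, i.e.\ the $c\,\delta_n$ appearing in the statement. The net operation $M\mapsto U(x)M^t U(x)^{-1}$ here acts as $\bigl(\begin{smallmatrix}a&b\\c&d\end{smallmatrix}\bigr)\mapsto\bigl(\begin{smallmatrix}d&-b/x\\-cx&a\end{smallmatrix}\bigr)$, so it swaps the diagonal, exchanges the off-diagonal roles of $e$ and $f$, and—crucially—shifts the power of $x$ by $-1$ in the $(1,2)$ entry and by $+1$ in the $(2,1)$ entry. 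Collecting these shifted powers against the $T^+$ template produces the asymmetric shifts $\theta_2(f_n)=f_{-n-1}$ and $\theta_2(e_n)=e_{-n+1}$, together with $\theta_2(h_n)=h_{-n}+c\,\delta_n$ and $\theta_2(c)=-c$; the analogous computation of $\theta_2(T^-(x))$ fixes the negative-mode images consistently with the same formulas.

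The only genuine obstacle is clerical: correctly tracking the $x^{\pm1}$ shifts generated by the $x$-dependent conjugation in the $\theta_2$ case, and keeping the central term, which is solely responsible for the $c\,\delta_n$ that distinguishes $\theta_2$ from the naive reflection $h_n\mapsto h_{-n}$. No deeper input is required, since Proposition \ref{prop:tw} already guarantees that $\theta_1$ and $\theta_2$ are automorphisms of $\cT$; the computation therefore has only to \emph{determine} their action on generators, not to verify that the defining relations are preserved.
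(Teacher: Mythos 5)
Your proposal is correct and follows exactly the route the paper intends: the paper gives no further argument beyond saying the corollary is deduced from the explicit forms \eqref{Tp}--\eqref{Tm} and relation \eqref{eq:auU}, which is precisely the entry-by-entry computation you carry out. Your intermediate formulas (the conjugation acting as $\bigl(\begin{smallmatrix}a&b\\c&d\end{smallmatrix}\bigr)\mapsto\bigl(\begin{smallmatrix}a&-c\\-b&d\end{smallmatrix}\bigr)$ for $\theta_1$ and as $\bigl(\begin{smallmatrix}a&b\\c&d\end{smallmatrix}\bigr)\mapsto\bigl(\begin{smallmatrix}d&-b/x\\-cx&a\end{smallmatrix}\bigr)$ for $\theta_2$, and $xU'(x)U(x)^{-1}=\tfrac12\,\mathrm{diag}(-1,1)$) all check out.
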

In the Serre-Chevalley basis, these automorphisms are given by $\theta_1(x^\pm_i)=x^\mp_i$, $\theta_1(k_i)=-k_{i}$
and
$\theta_2(x^\pm_i)= x^\mp_{1-i}$, $\theta_2(k_i)=-k_{1-i}$.

Note that solely solving the relations \eqref{eqU}, two well-known automorphisms of $\widehat{sl_2}$ are recovered. 
In the literature, the automorphism $\theta_1$ is called the  Chevalley involution, whereas the automorphism $\theta_2$ is a composition of the outer automorphism of $\widehat{sl_2}$ and the Chevalley involution.
\vspace{2mm}

According to the choice of automorphisms of $\widehat{sl_2}$, an FRT presentation for two different fixed point subalgebras of the affine Kac-Moody algebra 
$\widehat{sl_2}$ can be constructed. First, we consider the fixed point subalgebra $\widehat{sl_2}^{\theta_1}$.  
Using (\ref{Tp}), (\ref{Tm}), by straightforward simplifications the element (\ref{eq:Bxg}) is written as\footnote{The central charge $c$ does not appear, as the derivative of $U$ is vanishing.}:
\begin{equation}\label{eq:B1} 
 B(x)=\begin{pmatrix}
          0& \overline{A}_0\\
          0&0
           \end{pmatrix}+\sum_{n\geq 1}x^n\begin{pmatrix}
          \overline{G}_{n}& \overline{A}_{-n}\\
          \overline{A}_{n}&-\overline{G}_{n} \end{pmatrix}\;
 \mapsto T^+(x) + \theta_1(T^+(x))=T^+(x) + U\ T^-(1/x)^t\ U^{-1},
\end{equation} 
where the generators $\overline{A}_{n},\overline{G}_{n}$ are such that:
\begin{equation}\label{eq:OAdav}
 \overline{A}_n \mapsto 2(  e_n +  f_{-n})\quad\text{and}\quad \overline G_n \mapsto h_n-h_{-n}\;.
\end{equation}
\vspace{1mm}
 
Now, consider the second solution of (\ref{solU}). For this choice of automorphism, the element $B(x)$ is written as:
\begin{eqnarray}\label{eq:B2}
 B(x) &=&\begin{pmatrix}
          \overline{K}_0/2&\overline{Z}_0^- \\
          0&-\overline{K}_0/2
           \end{pmatrix}+\sum_{n\geq 1}x^n\begin{pmatrix}
          \overline{K}_{n}& \overline{Z}_{n}^-\\
          \overline{Z}_{n}^+&-\overline{K}_{n} \end{pmatrix}\; \\
&\mapsto& T^+(x) + \theta_2(T^+(x))=T^+(x) + U(x)\ T^-(1/x)^t\ U^{-1}(x)  -c x U'(x) U(x)^{-1},\nonumber
\end{eqnarray}
where the generators $\overline{K}_{n},\overline{Z}_{n+1}^+, \overline{Z}_{n}^-$, for $n\in \ZZ_{\geq 0}$, are such that:
\begin{equation}
 \overline{Z}^+_{n+1}\mapsto 2( e_{n+1} + e_{-n}),\quad \overline{Z}^-_n\mapsto 2(f_n + f_{-n-1}),\quad \mbox{and}\quad \overline{K}_{n}\mapsto h_{n} +h_{-n}+ c\delta_{n}\;.\label{eq:isOAaug}
\end{equation}
\vspace{1mm}

\subsection{Classical reflection equation for $\widehat{sl_2}$ and fixed point subalgebras}

Previously, we identified two different subalgebras of $\cT$  as fixed point subalgebras of $\widehat{sl_2}$ under the action of certain automorphisms. 
In this subsection,   using Proposition \ref{propB2} we  obtain two other known fixed point subalgebras of $\widehat{sl_2}$ .\vspace{1mm}

\begin{prop} Let the r-matrix be defined by (\ref{def:r}). The most general k-matrix solution of the classical reflection equation (\ref{eq:re}) is given by:
\begin{equation}\label{eq:k2}
 k(x)=\eta(x)\begin{pmatrix}
       \alpha(x-1/x)&\beta+\gamma/x\\
       -\beta-\gamma x&\delta(x-1/x)
      \end{pmatrix}\;,
\end{equation}
where $\alpha, \beta, \gamma$ and $\delta$ are scalar parameters and $\eta(x)$ is an arbitrary function.
\end{prop}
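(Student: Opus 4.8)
The plan is to solve the reflection equation \eqref{eq:re} directly, in the spirit of Proposition \ref{pr:U}, but now tracking the full functional dependence. First I would record the elementary scaling symmetry of \eqref{eq:re}: since each side is linear in $k_1(x)$ and in $k_2(y)$, the replacement $k(x)\mapsto\eta(x)k(x)$ multiplies every term by $\eta(x)\eta(y)$. Hence the prefactor $\eta(x)$ in \eqref{eq:k2} is free, and I may set $\eta\equiv 1$ and look for $k(x)=\left(\begin{smallmatrix} a(x)&b(x)\\ c(x)&d(x)\end{smallmatrix}\right)$. Next I would substitute the explicit matrix \eqref{def:r} for $r_{12}(x/y)$ and its partial transpose $r_{12}^{t_2}(xy)$ into \eqref{eq:re}, and expand the two $4\times4$ operators $[\,r_{12}(x/y),k(x)\otimes k(y)\,]$ and $k_1(x)r_{12}^{t_2}(xy)k_2(y)-k_2(y)r_{12}^{t_2}(xy)k_1(x)$ entry by entry, writing $u=x/y$ and $v=xy$.

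The resulting scalar identities organize themselves by the weight structure of the two r-matrices: $r(u)$ is block-diagonal for the grading $h\otimes I+I\otimes h$ (it couples only $e_1\otimes e_2$ and $e_2\otimes e_1$), while $r^{t_2}(v)$ is block-diagonal for $h\otimes I-I\otimes h$ (it couples only $e_1\otimes e_1$ and $e_2\otimes e_2$). Consequently the corner entries $(1,1)$ and $(4,4)$ vanish identically, while $(1,4)$, $(4,1)$ and $(2,3)$ all reduce to $a(x)d(y)=a(y)d(x)$; this forces $a$ and $d$ to be proportional to a common function, $a=\alpha g$ and $d=\delta g$. Feeding this back, the entries carrying exactly one off-diagonal index (for instance $(2,4)$ and $(3,1)$) become functional equations of the shape
\begin{equation}
 g(x)\big[\,y(xy-1)\,b(y)+(x-y)\,c(y)\,\big]=y(x^{2}-1)\,b(x)\,g(y),\nonumber
\end{equation}
together with the analogous relation for $c$. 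Because the $u$- and $v$-dependent coefficients are explicit rational functions, each such relation separates into its $x$- and $y$-parts; isolating these forces simultaneously $g(x)\propto x-1/x$, $b(x)=\beta+\gamma/x$ and $c(x)=-\beta-\gamma x$. The single entry purely in $b,c$, namely $(2,2)=-(3,3)$, then becomes an identity — both sides reduce to $-2(\beta^{2}-\gamma^{2})$ — so no relation is imposed among $\alpha,\beta,\gamma,\delta$. Reinstating $\eta(x)$ yields exactly \eqref{eq:k2}.

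The routine but lengthy part is the entrywise expansion; the genuine content, and the main obstacle, is the separation-of-variables step that converts the coupled functional equations into the specific rational shapes $x-1/x$, $\beta+\gamma/x$, $-\beta-\gamma x$, while establishing \emph{completeness}. One must check that no solution branch is discarded when dividing by $g$, by $\alpha$ or $\delta$, or by a factor such as $\beta x+\gamma$; the degenerate cases $\alpha=0$, $\delta=0$ and $k$ antidiagonal therefore have to be treated separately. In each of them the surviving equations are strictly fewer and are solved by hand, and the outcome is always a specialization of \eqref{eq:k2} — for example $\alpha=\delta=0$ gives the antidiagonal solutions, which recover the matrices $U(x)$ of \eqref{solUg} in accordance with Remark \ref{rem:UK}. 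This confirms that \eqref{eq:k2} is the most general solution.
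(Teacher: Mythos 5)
Your proposal is correct and takes essentially the same route as the paper, whose entire proof is the phrase ``by direct computation'': you substitute the explicit r-matrix, expand the sixteen entries of the reflection equation, and solve the resulting functional equations, and the key identities you display (the weight decomposition, the relation $a(x)d(y)=a(y)d(x)$ coming from the $(1,4)$, $(4,1)$, $(2,3)$ entries, the separated equation for $b$ and $c$, and the $(2,2)$ entry reducing to $\beta^2-\gamma^2$ on both sides) all check out. The only loose point is the phrasing ``set $\eta\equiv 1$ \dots forces $g(x)\propto x-1/x$'': the equations fix the four entries only up to a common overall function of $x$ (e.g.\ $k(x)=I$ is a solution with $g$ constant), so the correct conclusion is that the \emph{ratios} of the entries are as in \eqref{eq:k2} and the arbitrary factor $\eta(x)$ absorbs the rest --- which is what your final reinstatement step accomplishes anyway.
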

\begin{proof} By direct computation.
\end{proof}
\begin{rem} For $\alpha=1$, $\delta=-1$ and $\beta=\gamma=0$, 
(resp. $\beta= \gamma$ and $\alpha=\delta=0$)
the k-matrix \eqref{eq:k2} becomes proportional to the matrix $U$ (resp. $U(x)$) given in \eqref{solU}.
\end{rem}

By using Proposition \ref{propB2} with the k-matrix (\ref{eq:k2}), a subalgebra of $\widehat{sl_2}$ is defined depending on the 
four parameters $\alpha, \beta, \gamma$ and $\delta$. As examples, we consider the two following choices for $k(x)$.

\begin{example} \label{ex:B1}
For $k(x)\Big|_{\alpha=\delta=\gamma=0\atop \beta=\eta(x)=1}=\begin{pmatrix}
           0&1\\
           -1&0
          \end{pmatrix}=\kappa_+
$, 
the homomorphism \eqref{eq:Bxgg} becomes
\begin{equation}\label{eq:B3}
 B(x) =\begin{pmatrix}
          \overline{H}_0/2&\overline{F}_0/2 \\
          \overline{E}_0/2&-\overline{H}_0/2
           \end{pmatrix}+\sum_{n\geq 1}x^n\begin{pmatrix}
          \overline{H}_{n}& \overline{F}_{n}\\
          \overline{E}_{n}&-\overline{H}_{n} \end{pmatrix}\; \mapsto T^+(x) + \kappa_+ \ T^-(1/x)^t\ \kappa_+^{-1},
\end{equation}
where the generators $\overline{H}_{n},\overline{E}_{n}, \overline{F}_{n}$, for $n\in \ZZ_{\geq 0}$, are such that:
\begin{equation}\label{eq:Eb}
 \overline{E}_{n} \mapsto 2(e_{n} +e_{-n})\quad,\quad \overline{F}_n \mapsto 2(f_n + f_{-n})\quad\text{and}\quad  \overline{H}_{n} \mapsto h_{n} +h_{-n}\;.
\end{equation}
This subalgebra is denoted by $\widehat{sl_2}^{\kappa_+}$.
\end{example}

\begin{example} \label{ex:B2}
For $k(x)\Big|_{\alpha=\delta=\beta=0 \atop \gamma=\eta(x)=1}=\begin{pmatrix}
           0&1/x\\
           -x&0
          \end{pmatrix}=\kappa_-(x)$, the homomorphism \eqref{eq:Bxgg} becomes
\ben\label{eq:B4}
 B(x) &=& x^{-1} \begin{pmatrix}
          0&\widetilde{F}_0/2 \\
         0&0
           \end{pmatrix}+\begin{pmatrix}
          \widetilde{H}_0/2&\widetilde{F}_1 \\
          0&-\widetilde{H}_0/2
           \end{pmatrix}+x \begin{pmatrix}
          \widetilde{H}_1&\widetilde{F}_2 \\
          \widetilde{E}_0/2&-\widetilde{H}_1
           \end{pmatrix}+\sum_{n\geq 2}x^n\begin{pmatrix}
          \widetilde{H}_{n}& \widetilde{F}_{n+1}\\
          \widetilde{E}_{n-1}&-\widetilde{H}_{n} \end{pmatrix}\nonumber\; \\
&\mapsto&  T^+(x) + \kappa_-(x)\ T^-(1/x)^t\ \kappa_-(x)^{-1}-c x \kappa_-'(x) \kappa_-(x)^{-1} ,\nonumber
\een
where the generators $\widetilde{H}_{n},\widetilde{E}_{n}, \widetilde{F}_{n}$, for $n\in \ZZ_{\geq 0}$, are such that:
\begin{equation}\label{eq:Et}
\widetilde{E}_{n}\mapsto 2( e_{n+1} + e_{-n+1})\quad,\quad \widetilde{F}_n\mapsto 2(f_{n-1} +f_{-n-1})\quad\text{and}\quad  \widetilde{H}_{n}\mapsto h_{n} +h_{-n}+2c \delta_{n}\;.
\end{equation}
This subalgebra is denoted by $\widehat{sl_2}^{\kappa-}$.
\end{example}
By using the commutation relations \eqref{CW1}-\eqref{CW3}, one finds that both $\widehat{sl_2}^{\kappa_\pm}$  are isomorphic Lie algebras. 
Note also that $\widehat{sl_2}^{\kappa_+}$ (resp.  $\widehat{sl_2}^{\kappa_-}$) can be viewed as the fixed point subalgebra of $\widehat{sl_2}$ under the action of the Lusztig automorphism
in the Cartan-Weyl presentation
$e_n\mapsto e_{-n}$, $f_n\mapsto f_{-n}$,  $h_n\mapsto h_{-n}$ and $c\mapsto -c$ 
(resp. $e_n\mapsto e_{-n+2}$, $f_n\mapsto f_{-n-2}$,  $h_n\mapsto h_{-n}+2c\delta_{n}$ and $c\mapsto -c$). However,
let us emphasize that the Lusztig automorphisms cannot be written in the form of
Proposition \ref{prop:tw}.

\vspace{1mm}

\section{The Onsager algebras and current algebras}
 In this section, it is shown that the three non-standard classical Yang-Baxter algebras (\ref{eq:Al}) associated with 
(\ref{eq:B1}), (\ref{eq:B2}) and   (\ref{eq:B3})  provide an FRT presentation for the Onsager algebra, augmented Onsager algebra and the $sl_2$-invariant Onsager algebra, respectively. 
In each case, the corresponding current presentations are derived. Using the FRT presentation, we also derive the corresponding commutative subalgebras. In particular, this provides a new derivation of the well-known mutually commuting quantities in integrable models generated from the Onsager algebra such as the Ising \cite{Ons44} or superintegrable chiral Potts models \cite{RG,Davies}. For the augmented Onsager algebra, it gives classical analogs of the mutually commuting quantities constructed in \cite{BB3}.

\subsection{The Onsager algebra revisited\label{sec:Ons}}
Introduced in the context of mathematical physics on the exact solution of the two-dimensional Ising model \cite{Ons44}, the Onsager algebra is known to admit at least two presentations. The first presentation which originates in Onsager's work \cite{Ons44} is now recalled. 
\begin{defn}\label{def:OA}
 The Onsager algebra $\cO$ is generated by $\{A_n,G_m|n,m \in\ZZ\}$ subject to the following relations:
 \begin{eqnarray}
 &&[A_n,A_m]=4\ G_{n-m}\;,\label{eq:OA1}\\
 &&[G_n,A_m]=2A_{n+m}-2A_{m-n}\;,\\
  &&[G_n,G_m]=0\label{eq:OA3}\;.
 \end{eqnarray}
\end{defn} 
As the Lie bracket is anti-symmetric, $[A_0,A_n]+[A_n,A_0]=0$. From (\ref{eq:OA1}), note that
\beqa
G_{-n}+G_n=0.\label{eq:Glin}
\eeqa

Note that a second presentation is given in terms of two generators $A_0,A_1$ subject to a pair of relations, the so-called
Dolan-Grady relations \cite{DG82}. They read:
 \begin{eqnarray}
[A_0,[A_0,[A_0,A_1]]]=16[A_0,A_1], \qquad [A_1,[A_1,[A_1,A_0]]]=16[A_1,A_0].\label{eq:DG}
 \end{eqnarray}

\begin{thm}
The non-standard classical Yang-Baxter algebra (\ref{eq:Al}) specialized for
\begin{equation}
 B(x)=\begin{pmatrix}
       {\cal G}(x) &{\cal A}^-(x)\\
       {\cal A}^+(x) & -{\cal G}(x)
      \end{pmatrix}\label{eq:BO}
\end{equation}
with 
\begin{eqnarray}
{\cal G}(x)=\sum_{n\geq 1} x^n G_{n},\quad \ {\cal A}^-(x)=\sum_{n\geq 0} x^n A_{-n}
 ,\quad \ {\cal A}^+(x)=\sum_{n\geq 1} x^n A_{n}\;\label{eq:cu}
\end{eqnarray}
and the r-matrix given by
 \begin{equation}
   \overline{r}_{12}(x,y)=r_{12}(x/y)+U_1\ r_{12}^{t_1}(1/(xy))\ U_1^{-1} \quad \text{where $U$ is given by \eqref{solU}}, \label{rOnsager}
 \end{equation}
provides an FRT presentation of the Onsager algebra.
\end{thm}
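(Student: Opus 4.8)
The plan is to verify that substituting the specialized matrix $B(x)$ from \eqref{eq:BO} into the defining relation \eqref{eq:Al}, with the r-matrix $\overline{r}_{12}(x,y)$ given by \eqref{rOnsager}, reproduces exactly the Onsager relations \eqref{eq:OA1}--\eqref{eq:OA3}. The key conceptual point is that this is already \emph{almost} done by the earlier structural results: by Proposition \ref{propB} (or Remark \ref{rem:UK} and Proposition \ref{propB2}), with $U$ the diagonal matrix of \eqref{solU} and $\theta_1$ the Chevalley involution of Corollary \ref{cor:twloop}, the map $\psi_{\theta_1}$ sends $B(x)$ to $T^+(x)+\theta_1(T^+(x))$, and the entries of this image are precisely the combinations $\overline{A}_n \mapsto 2(e_n+f_{-n})$ and $\overline{G}_n\mapsto h_n-h_{-n}$ recorded in \eqref{eq:B1}--\eqref{eq:OAdav}. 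Thus I would first identify the generators appearing in \eqref{eq:cu} with those of \eqref{eq:B1}, i.e.\ match $G_n \leftrightarrow \overline{G}_n$, $A_n \leftrightarrow \overline{A}_n$, so that $B(x)$ in \eqref{eq:BO} is literally the same matrix as in \eqref{eq:B1}.

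The strategy then splits into two halves. For the forward direction, I would extract the Onsager relations from the $\widehat{sl_2}$ image: using the Cartan-Weyl relations \eqref{CW1}--\eqref{CW3}, I compute the brackets of the images $2(e_n+f_{-n})$ and $h_m-h_{-m}$ directly. For instance, $[\overline{A}_n,\overline{A}_m]$ maps to $4\big([e_n,f_{-m}]+[f_{-n},e_m]\big)$, which by \eqref{CW3} equals $4\big((h_{n-m}+cn\delta_{n-m}) - (h_{m-n}+cm\delta_{m-n})\big) = 4(h_{n-m}-h_{m-n})$ since the central terms cancel on the diagonal $n=m$; this is exactly $4\,\overline{G}_{n-m}$, matching \eqref{eq:OA1}. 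Similarly, $[\overline{G}_n,\overline{A}_m]$ and $[\overline{G}_n,\overline{G}_m]$ yield \eqref{eq:OA1}--\eqref{eq:OA3} after using \eqref{CW2} and \eqref{CW1}. This shows the image $\cT^{\theta_1}$ satisfies the Onsager relations, hence there is a surjective Lie algebra homomorphism $\cO \to \cT^{\theta_1}$.

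For the converse—that the FRT relation \eqref{eq:Al} imposes \emph{no more} than the Onsager relations—I would insert \eqref{eq:BO} into \eqref{eq:Al} and read off the matrix components. Since $\psi_{\theta_1}$ is a homomorphism (Proposition \ref{propB}), we already know \eqref{eq:Al} holds on the image with the r-matrix \eqref{rOnsager}; the task is to check that collecting the coefficient of each monomial $x^\ell y^p$ in each of the four matrix entries of the $4\times 4$ identity \eqref{eq:Al} produces \emph{precisely} the three families \eqref{eq:OA1}--\eqref{eq:OA3} and the linearity $G_{-n}+G_n=0$ of \eqref{eq:Glin}, with no independent extra constraint. Concretely I would compute $\overline{r}_{12}(x,y)$ explicitly from \eqref{rOnsager} using the $4\times4$ form of $r$ in \eqref{def:r} and the diagonal $U$, then evaluate the brackets $[\overline{r}_{21}(y,x),B_1(x)]$ and $[B_2(y),\overline{r}_{12}(x,y)]$ entrywise and match powers of $x,y$.

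The main obstacle I expect is the converse (completeness) direction: verifying that the entrywise expansion of \eqref{eq:Al} yields exactly the Onsager relations and does not secretly encode additional relations among the $A_n,G_n$, nor fails to capture some of them. This requires care with the spectral-parameter bookkeeping—the factors $r_{12}(x/y)$ and $r_{12}^{t_1}(1/(xy))$ in $\overline{r}$ have poles that must combine correctly, and the range conventions ($\mathcal{G}(x)$ starting at $n\ge 1$, $\mathcal{A}^-$ at $n\ge 0$, $\mathcal{A}^+$ at $n\ge 1$) in \eqref{eq:cu} must be reconciled with the relation \eqref{eq:Glin} that makes $G_{-n}=-G_n$ consistent. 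Once the expansion is organized so that the coefficient of each monomial reproduces one instance of \eqref{eq:OA1}--\eqref{eq:OA3}, the two homomorphisms $\cO \to \cT^{\theta_1}$ and the quotient of the FRT algebra onto $\cO$ are mutually inverse, establishing the claimed FRT presentation.
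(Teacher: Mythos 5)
Your proposal is correct in substance, and its decisive step coincides with the paper's proof: the paper proves the theorem entirely by inserting \eqref{eq:BO} into \eqref{eq:Al} with the r-matrix \eqref{rOnsager}, obtaining the current relations \eqref{eq:GG}--\eqref{eq:AAmp}, expanding around $y=0$, and checking that the resulting mode relations are \emph{exactly} the Onsager relations under the identification $A_n=\overline{A}_n$, $G_n=\mathrm{sign}(n)\overline{G}_{|n|}$ (which also accounts for \eqref{eq:Glin} and the range conventions you flag). That is precisely your ``converse'' step, and you correctly identify it as the place where all the work lies. Where you differ is in adding a ``forward'' direction through the $\widehat{sl_2}$ realization: computing the brackets of $2(e_n+f_{-n})$ and $h_n-h_{-n}$ via \eqref{CW1}--\eqref{CW3}. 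That computation is sound (the central terms do cancel as you say), but it is logically redundant for the theorem as stated, which concerns the \emph{abstract} algebra presented by \eqref{eq:Al}, not its image $\cT^{\theta_1}$; the paper reserves exactly this observation for the Remark following the theorem (that $\widehat{sl_2}^{\theta_1}\cong\cO$ via \eqref{eq:OAdav}). Be aware of one small logical wrinkle in your closing sentence: the maps $\cO\to\cT^{\theta_1}$ and $\cB\to\cO$ cannot be ``mutually inverse'' without first knowing that $\psi_{\theta_1}$ is injective on $\cB$, which you never establish; fortunately this is unnecessary, since the entrywise expansion alone already shows that the relation sets agree and hence that $\cB\cong\cO$. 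In short: your plan works, its essential computation is the paper's, and the $\widehat{sl_2}$ detour buys you the Remark rather than the theorem.
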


\begin{proof} Inserting (\ref{eq:BO}) into (\ref{eq:Al}), one has:
\begin{eqnarray}
 &&[\ {\cal G}(x)\ , \ {\cal G}(y)\ ]=0\;,\label{eq:GG}\\
 &&[\ {\cal G}(x)\ , \ {\cal A}^+(y)\ ]=\frac{2x(1-y^2)}{(x-y)(xy-1)}{\cal A}^+(y)+\frac{2y}{x-y}{\cal A}^+(x)
 +\frac{2 xy}{xy-1}{\cal A}^{-}(x)\;,\label{eq:GAp}\\
 &&[\ {\cal G}(x)\ , \ {\cal A}^-(y)\ ]=\frac{2x(y^2-1)}{(x-y)(xy-1)}{\cal A}^-(y)+\frac{2x}{y-x}{\cal A}^-(x)
 +\frac{2}{1-xy}{\cal A}^{+}(x)\;,\label{eq:GAm}\\
 &&[\ {\cal A}^+(x)\ , \ {\cal A}^{+}(y)\ ]=\frac{4xy}{1-xy}({\cal G}(x)-{\cal G}(y))\quad,\quad [\ {\cal A}^+(x)\ , \ {\cal A}^{-}(y)\ ]=\frac{4x}{x-y}({\cal G}(x)-{\cal G}(y))\;,\label{eq:AA}\\
 &&[\ {\cal A}^-(x)\ , \ {\cal A}^{-}(y)\ ]=\frac{4}{xy-1}({\cal G}(x)-{\cal G}(y))\;\label{eq:AAmp}.
 \end{eqnarray}
 Define
 \begin{eqnarray}
{\cal G}(x)=\sum_{n\geq 1} x^n \overline{G}_{n},\quad \ {\cal A}^-(x)=\sum_{n\geq 0} x^n \overline{A}_{-n}
 ,\quad \ {\cal A}^+(x)=\sum_{n\geq 1} x^n \overline{A}_{n}\;.\label{eq:cu}
\end{eqnarray}
Then, we extract from (\ref{eq:GG})-(\ref{eq:AAmp}) the complete set of relations satisfied by the generators $ \overline{A}_{n}, \overline{G}_{m}$ for $n,m\geq 1$. Consider (\ref{eq:AA}),   (\ref{eq:AAmp}). Expanding around $y=0$ both sides of the three equations and identifying the power series, one obtains equivalently:
\beqa
\big[\oA_n,\oA_m\big] &=& 4sign(n-m)\oG_{|n-m|} ,\qquad \big[\oA_{-n},\oA_{-m}\big] = 4sign(m-n)\oG_{|n-m|} ,\nonumber\\
\big[\oA_{n},\oA_{-m}\big] &=& 4\oG_{n+m}, 
\qquad \qquad \qquad  \qquad   \big[\oA_n,\oA_0\big]=4\oG_n\quad \qquad  \qquad  \qquad \mbox{for any} \quad n,m\geq 1.\nonumber
\eeqa
Consider the r.h.s of (\ref{eq:GAp}). Around $y=0$, one has:
\beqa
\frac{2x(1-y^2)}{(x-y)(xy-1)}{\cal A}^+(y)= -2\sum_{m=1}^\infty\sum_{n=0}^{m-1}x^{-n}y^m\oA_{m-n} - 2\sum_{m=1}^\infty\sum_{n=1}^mx^{n}y^m\oA_{m-n}\nonumber 
\eeqa
whereas 
\beqa
\frac{2y}{x-y}{\cal A}^+(x)+\frac{2 xy}{xy-1}{\cal A}^{-}(x) &=& \sum_{n=1}^{\infty}\sum_{m=1}^{\infty}x^ny^m (2\oA_{n+m}-2\oA_{m-n}) \nonumber\\
&&+ \ 2\sum_{m=1}^\infty\sum_{n=0}^{m-1}x^{-n}y^m\oA_{m-n} + 2\sum_{m=1}^\infty\sum_{n=1}^mx^{n}y^m\oA_{m-n}.\nonumber
\eeqa
Combining all terms together with the l.h.s of (\ref{eq:GAp}), one gets equivalently:
\beqa
\big[\oG_n,\oA_m\big] = 2\oA_{m+n} -2\oA_{m-n} \quad \mbox{for any} \quad n,m\geq 1\nonumber.
\eeqa
The similar analysis for (\ref{eq:GAm}) gives:
\beqa
\big[\oG_n,\oA_{-m}\big] = 2\oA_{-m+n} -2\oA_{-m-n}\quad \mbox{and}\quad \big[\oG_n,\oA_{0}\big] = 2\oA_{n} -2\oA_{-n}\quad \mbox{for any} \quad n,m\geq 1\nonumber.
\eeqa
Finally, from (\ref{eq:GG}) we immediatly obtain:
\beqa
\big[\oG_n,\oG_{m}\big] =0\quad \mbox{for any} \quad n,m\geq 1\nonumber.
\eeqa
It remains to show that the algebra generated by $\oA_n,\oG_m$ is isomorphic to the Onsager algebra.
With the identification  
\beqa
A_n = \oA_n \quad \mbox{and}\quad G_n= sign(n)\oG_{|n|}\quad \mbox{for any $n,m\in{\mathbb Z}$}, 
\eeqa
one obtains the defining relations of the Onsager algebra (\ref{eq:OA1})-(\ref{eq:OA3}).
\end{proof}

\begin{rem}  
The fixed point subalgebra of the loop algebra of  $sl_2$ under the action of $\theta_1$  (see relations \eqref{eq:theta1}) is isomorphic to the Onsager algebra $\cO$  \cite{Davies,Roan}. For $\widehat{sl_2}^{\theta_1}$, the isomorphism is given by (\ref{eq:OAdav}).
\end{rem}
Note that the relations (\ref{eq:GG})-(\ref{eq:AAmp}) provide a current presentation for the Onsager algebra.\vspace{2mm}
 
Using the general results presented in Section \ref{sec:CS} (see Propositions \ref{pr:a1}, \ref{pr:a2}), 
the generating functions of the elements in the 
commutative subalgebras of the Onsager algebra are easily derived. On one hand, by Proposition \ref{pr:a1} we routinely obtain:
\begin{prop} The generating function 
\begin{equation} 
 t^{ons}(x)=2{\cal G}(x)^2+{\cal A}^+(x){\cal A}^-(x)+{\cal A}^-(x){\cal A}^+(x)\;
\end{equation}
is such that $[t^{ons}(x),t^{ons}(y)]=0$ for any $x,y$.
\end{prop}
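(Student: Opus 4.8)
The plan is to invoke Proposition \ref{pr:a1} directly. Recall that Proposition \ref{pr:a1} asserts that the generating function $t(x)=\mathrm{tr}\left(B(x)^2\right)$ lies in a commutative family, i.e. $[t(x),t(y)]=0$, for any non-standard classical Yang-Baxter algebra $\cB$. Since the Onsager algebra has just been given an FRT presentation as such a $\cB$ with the specific $B(x)$ of \eqref{eq:BO}, I would simply compute the trace of $B(x)^2$ for this $2\times 2$ realization and check that it equals the stated $t^{ons}(x)$. Commutativity then follows automatically from Proposition \ref{pr:a1} without any further computation.

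The only actual step is the trace computation. Writing
\begin{equation}
B(x)=\begin{pmatrix} {\cal G}(x) & {\cal A}^-(x)\\ {\cal A}^+(x) & -{\cal G}(x)\end{pmatrix},\nonumber
\end{equation}
I would square this matrix and read off the diagonal entries. The $(1,1)$ entry of $B(x)^2$ is ${\cal G}(x)^2+{\cal A}^-(x){\cal A}^+(x)$ and the $(2,2)$ entry is ${\cal A}^+(x){\cal A}^-(x)+{\cal G}(x)^2$, so their sum gives $\mathrm{tr}\left(B(x)^2\right)=2{\cal G}(x)^2+{\cal A}^-(x){\cal A}^+(x)+{\cal A}^+(x){\cal A}^-(x)$, which is exactly $t^{ons}(x)$ as claimed (the order of the two mixed terms is immaterial in the statement). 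Note that because the entries of $B(x)$ are elements of a noncommutative algebra, one must keep the products in the order dictated by matrix multiplication and not collapse ${\cal A}^+{\cal A}^-$ with ${\cal A}^-{\cal A}^+$; this is why both orderings appear in $t^{ons}(x)$.

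There is essentially no obstacle here: the result is a corollary of the already-established Proposition \ref{pr:a1} together with the identification of the Onsager algebra as a $\cB$, and the verification that $\mathrm{tr}\left(B(x)^2\right)$ has the displayed form is a one-line matrix computation. The only point requiring the slightest care is bookkeeping of the noncommutative entries when squaring $B(x)$, ensuring that the cross terms ${\cal A}^\pm(x){\cal A}^\mp(x)$ are retained in both orders rather than being combined. Once the trace is matched to $t^{ons}(x)$, the commutativity $[t^{ons}(x),t^{ons}(y)]=0$ is immediate from Proposition \ref{pr:a1}.
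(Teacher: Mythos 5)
Your proposal is correct and matches the paper's own argument: the paper derives this proposition precisely as a direct application of Proposition \ref{pr:a1}, with $t^{ons}(x)=\mathrm{tr}\left(B(x)^2\right)$ computed from the $2\times 2$ current matrix \eqref{eq:BO}. The trace computation and the remark about keeping the noncommutative cross terms in both orders are exactly what is needed; nothing is missing.
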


On the other hand, by Proposition \ref{pr:a2} it follows:
\begin{prop} Let $\kappa,\kappa^*,\mu$ be arbitrary scalars. The generating function 
\begin{equation} \label{eq:t1}
 b^{ons}(x)=(\kappa+\kappa^*/x){\cal A}^+(x)+(\kappa + \kappa^*x){\cal A}^-(x)+\mu(1/x-x){\cal G}(x)
\end{equation}
is such that $[b^{ons}(x),b^{ons}(y)]=0$ for any $x,y$.
\end{prop}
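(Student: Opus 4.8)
The plan is to recognize $b^{ons}(x)$ as a member of the commuting family supplied by Proposition \ref{pr:a2}. Concretely, I would look for a matrix $M(x)\in End(\CC^2)$ such that $b^{ons}(x)=tr\big(M(x)B(x)\big)$, with $B(x)$ the Onsager current matrix \eqref{eq:BO}, and then verify that this $M(x)$ satisfies the hypothesis \eqref{eq:reD} for the non-standard $r$-matrix $\overline{r}_{12}(x,y)$ of \eqref{rOnsager}. Since $B(x)$ is traceless, only the traceless part of $M(x)$ contributes, and matching the coefficients of ${\cal A}^+(x)$, ${\cal A}^-(x)$ and ${\cal G}(x)$ in $tr(M(x)B(x))$ against \eqref{eq:t1} forces
\[
M(x)=\begin{pmatrix} \tfrac{\mu}{2}(1/x-x) & \kappa+\kappa^*/x \\ \kappa+\kappa^* x & -\tfrac{\mu}{2}(1/x-x) \end{pmatrix}.
\]
One checks immediately that $tr(M(x)B(x))=(\kappa+\kappa^*/x){\cal A}^+(x)+(\kappa+\kappa^* x){\cal A}^-(x)+\mu(1/x-x){\cal G}(x)=b^{ons}(x)$, as required.

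The key step is then to establish \eqref{eq:reD}, namely $[tr_1(\overline{r}_{12}(x,y)M_1(x)),M_2(y)]=0$ for this $M$. I would compute the partial trace $N(x,y):=tr_1(\overline{r}_{12}(x,y)M_1(x))$, which is a $2\times 2$ matrix acting in space $2$, by inserting the explicit form \eqref{rOnsager} with $U=\mathrm{diag}(1,-1)$: that is, writing $\overline{r}_{12}(x,y)=r_{12}(x/y)+U_1 r_{12}^{t_1}(1/(xy))U_1^{-1}$ with $r$ from \eqref{def:r}, and evaluating $tr_1$ of each block against $M(x)\otimes I\!\!I$. Finally I would check, by a direct $2\times 2$ computation, that $N(x,y)$ commutes with $M(y)$. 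It is worth noting that $M(x)$ coincides with the k-matrix \eqref{eq:k2} for the values $\alpha=-\mu/2$, $\delta=\mu/2$, $\beta=\kappa$, $\gamma=\kappa^*$, $\eta(x)=1$, except for the sign of its lower-left entry; hence $M(x)$ is \emph{not} itself a k-matrix, and \eqref{eq:reD} genuinely has to be verified rather than inherited from the reflection equation \eqref{eq:re}.

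Once \eqref{eq:reD} is confirmed for all $x,y$ (so that, by cyclicity of the trace, both commutator terms appearing in the proof of Proposition \ref{pr:a2} vanish), Proposition \ref{pr:a2} applies verbatim and yields $[b^{ons}(x),b^{ons}(y)]=0$, completing the proof. The main obstacle I anticipate is precisely the verification of \eqref{eq:reD}: the bookkeeping in forming $tr_1(\overline{r}_{12}(x,y)M_1(x))$ from the rational, non-skew-symmetric matrix $\overline{r}$ is the delicate point, although it ultimately reduces to elementary rational-function identities in $x$ and $y$. As an independent consistency check, one may instead expand $[b^{ons}(x),b^{ons}(y)]$ directly using the current relations \eqref{eq:GG}--\eqref{eq:AAmp}; the three parameters $\kappa,\kappa^*,\mu$ should then arrange so that all coefficient functions cancel, recovering the same result.
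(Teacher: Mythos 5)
Your proposal is correct and follows essentially the same route as the paper: exhibit an $M(x)$ with $tr\big(M(x)B(x)\big)=b^{ons}(x)$, check condition \eqref{eq:reD} for the r-matrix \eqref{rOnsager}, and invoke Proposition \ref{pr:a2}. The paper uses $M(x)$ with diagonal entries $(\mu/x,\ \mu x)$ rather than your traceless choice; the two differ by $\tfrac{\mu}{2}(x+1/x)I\!\!I$, which is immaterial since $B(x)$ is traceless and $tr_1\,\overline{r}_{12}(x,y)=0$, so both satisfy \eqref{eq:reD} simultaneously.
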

\begin{proof} First, one shows that
\begin{equation}
  M(x)=\begin{pmatrix}
        \mu/x&\kappa+\kappa^*/x\\
        \kappa + \kappa^* x &\mu x
       \end{pmatrix}\;
 \end{equation}
is a solution of the relation \eqref{eq:reD} with the r-matrix given by \eqref{rOnsager}. From (\ref{tb2}), we immediately get (\ref{eq:t1}).
\end{proof}
\begin{rem} Expanding the generating function $b^{ons}(x)$ in $x$, one produces the well-known mutually commuting quantities of the Onsager algebra $\cO$ \cite{Ons44,RG,Davies}. For $k=0,1,2,\dots$, the coefficients of the power series are given by:
\begin{equation}
 I_k=\kappa(A_k+A_{-k}) +\kappa^* (A_{k+1}+A_{-k+1})+\mu( G_{k+1}-G_{k-1})\;.
\end{equation}
\end{rem}
\vspace{1mm}

\subsection{The augmented Onsager algebra revisited \label{sec:aOns}}
The augmented Onsager algebra has been introduced in \cite{BC13} as a classical analog of the  augmented tridiagonal algebra of the first kind \cite[page 5-6]{IT}, also called the augmented $q-$Onsager algebra in \cite{BB3,BB17}.
\begin{defn}\label{def:augOA} \cite{BC13}
 The augmented Onsager algebra $\cO^{aug}$ is generated by $\{K_n,Z^\pm_{m}|n,m \in\ZZ\}$  subject to the following relations
 \begin{eqnarray}
 &&[Z^+_n,Z^+_m]=[Z^{-}_n,Z^{-}_m]=[K_n,K_m]=0\ ,\label{eq:OAaug3}\\
 && [Z^+_n,Z^{-}_m]=4(K_{n+m}+ K_{-n+m+1})\ ,\qquad [K_n,Z^\pm_m]=\pm 2(Z^\pm_{n+m}+Z^\pm_{-n+m})\ ,\label{eq:OAaug1}\\
&& K_n-K_{-n}=0\ ,\quad Z_n^+-Z_{-n+1}^+=0 \quad \mbox{and}\quad  Z_n^--Z_{-n-1}^- =0\ .\label{eq:OAaug2}
 \end{eqnarray}
\end{defn} 
Let us point out that, from the defining relations (\ref{eq:OAaug3})-(\ref{eq:OAaug1}), it is possible to show that the  three generators $K_0,Z^\pm_0$ satisfy:
 \begin{eqnarray}
[Z^\pm_0,[Z^\pm_0,[Z^\pm_0,Z^\mp_0]]]=0, \qquad [K_0,Z^\pm_0]=\pm 4 Z^\pm_0\label{eq:DGaug}.
 \end{eqnarray}
These latter relations can be viewed as the classical analog of the relations in \cite[page 5]{IT} or \cite[eqs. (3.22)]{BB3}.\vspace{1mm}

\begin{thm}
The non-standard classical Yang-Baxter algebra (\ref{eq:Al}) specialized for
\begin{equation}
 B(x)=\begin{pmatrix}
       {\cal K}(x) &{\cal Z}^-(x)\\
       {\cal Z}^+(x) & -{\cal K}(x)
      \end{pmatrix}\label{eq:BOaug}
\end{equation}
with 
\begin{equation}
{\cal K}(x)=K_{0}/2+\sum_{n\geq 1} x^n K_{n}\quad,\quad {\cal Z}^-(x)=\sum_{n\geq 0} x^n Z^-_{n}
 \quad,\quad {\cal Z}^+(x)=\sum_{n\geq 1} x^n Z^+_{n}\;, \label{eq:caug}
\end{equation}
and the r-matrix given by
 \begin{equation}
   \overline{r}_{12}(x,y)=r_{12}(x/y)+U_1(x)\ r_{12}^{t_1}(1/(xy))\ U_1(x)^{-1} \quad \text{where $U(x)$ is given by \eqref{solU}}, \label{raugOnsager}
 \end{equation}
provides an FRT presentation of the augmented Onsager algebra.
\end{thm}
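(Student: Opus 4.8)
The plan is to follow line by line the template of the preceding theorem for the Onsager algebra. First I would compute the non-standard r-matrix $\overline{r}_{12}(x,y)$ explicitly from \eqref{raugOnsager}, substituting the r-matrix \eqref{def:r} and the $x$-dependent off-diagonal matrix $U(x)$ of \eqref{solU}. Although $U(x)$ carries the half-integer powers $x^{\pm 1/2}$, these enter only through the conjugation $U_1(x)(\cdot)U_1(x)^{-1}$ combined with the shifted argument $1/(xy)$ of $r^{t_1}$, so the resulting $\overline{r}_{12}(x,y)$ is a genuine rational function of $x$ and $y$. Writing it out is routine but must be done with care, because conjugation by the off-diagonal $U(x)$ swaps the two diagonal $sl_2$-channels and the two off-diagonal ones, while the factor $x^{\pm1/2}$ rescales the spectral parameter asymmetrically in the two off-diagonal positions.

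Next I would insert the specialized current matrix \eqref{eq:BOaug} into the defining relation \eqref{eq:Al}. Collecting the four $2\times 2$ matrix entries yields a closed system of commutation relations among the generating functions ${\cal K}(x)$, ${\cal Z}^+(x)$ and ${\cal Z}^-(x)$, structurally parallel to \eqref{eq:GG}--\eqref{eq:AAmp}; these already furnish the current presentation of $\cO^{aug}$. I would then expand both sides as power series in $y$ around $y=0$, decompose the rational kernels into geometric series, and match the coefficients of $x^n y^m$ over the ranges $n,m\geq 0$ prescribed by \eqref{eq:caug}, thereby extracting the mode brackets among $K_n$, $Z^+_n$ and $Z^-_n$. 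The target relations are \eqref{eq:OAaug3}--\eqref{eq:OAaug1}.

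The step I expect to be the main obstacle is the bookkeeping of the index shifts. Because $U(x)$ depends on $x$, its conjugation displaces the spectral parameter by $+1$ in one off-diagonal channel and by $-1$ in the other, which is exactly what forces the \emph{asymmetric} shifted linear constraints $Z_n^+ = Z_{-n+1}^+$ and $Z_n^- = Z_{-n-1}^-$ of \eqref{eq:OAaug2}, in contrast with the unshifted folding $G_{-n}+G_n=0$ encountered in the Onsager case. Keeping track of these shifts while matching power-series coefficients, and accounting for the $c\delta_n$ term already visible in the diagonal entry ${\cal K}(x)=K_0/2+\cdots$, is where the computation is most delicate.

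Finally, I would close the argument by reading the folding relations \eqref{eq:OAaug2} as the very definitions that extend the generators to all integer indices, and then verifying that the nonnegative-index brackets obtained above coincide, after this extension, with the full defining relations \eqref{eq:OAaug3}--\eqref{eq:OAaug1} of Definition \ref{def:augOA}. As an independent consistency check, I would confirm that these brackets are compatible with the embedding \eqref{eq:isOAaug} of $\cO^{aug}$ into $\widehat{sl_2}$, which fixes the identification and confirms that \eqref{eq:Al} specialized as in the statement is indeed an FRT presentation of the augmented Onsager algebra.
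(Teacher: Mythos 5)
Your proposal follows essentially the same route as the paper's proof: insert the specialized $B(x)$ into \eqref{eq:Al} to obtain the current relations, expand around $y=0$ to extract the mode brackets among $K_n$, $Z^\pm_n$, and then use the shifted folding identifications ($Z^+_n=Z^+_{-n+1}$, $Z^-_n=Z^-_{-n-1}$, $K_n=K_{-n}$) to match Definition \ref{def:augOA}. Your added remarks on the explicit form of $\overline{r}_{12}(x,y)$ and the consistency check against \eqref{eq:isOAaug} are sound but do not change the argument, so the two proofs coincide in substance.
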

\begin{proof} 
Inserting (\ref{eq:BOaug}) into (\ref{eq:Al}), one has:
\begin{eqnarray}
&&[\ {\cal K}(x)\ ,\ {\cal K}(y)\ ]=0\quad , \quad [\ {\cal Z}^+(x)\ ,\ {\cal Z}^+(y)\ ]=0\quad , \quad [\ {\cal Z}^-(x)\ ,\ {\cal Z}^-(y)\ ]=0\quad ,\label{eq:KK}\\
&&[\ {\cal K}(x)\ ,\ {\cal Z}^+(y)\ ]=\frac{2y(x+1)(y-1)}{(x-y)(xy-1)}{\cal Z}^+(x)-\frac{2y(x^2-1)}{(x-y)(xy-1)}{\cal Z}^+(y)\quad ,\label{eq:KZp} \\
&&[\ {\cal K}(x)\ ,\ {\cal Z}^-(y)\ ]=\frac{2y(x^2-1)}{(x-y)(xy-1)}{\cal Z}^-(y)-\frac{2x(x+1)(y-1)}{(x-y)(xy-1)}{\cal Z}^-(x)\quad ,\label{eq:KZm} \\
&&[\ {\cal Z}^+(x)\ ,\ {\cal Z}^-(y)\ ]=\frac{4x}{(x-y)(xy-1)} \big( (x+1)(y-1) {\cal K}(x)\ -\ (x-1)(y+1){\cal K}(y) \big)\;.\label{eq:ZZ}
\end{eqnarray}
Define
\begin{equation}
{\cal K}(x)=\oK_{0}/2+\sum_{n\geq 1} x^n \oK_{n}\quad,\quad {\cal Z}^-(x)=\sum_{n\geq 0} x^n \oZ^-_{n}
 \quad,\quad {\cal Z}^+(x)=\sum_{n\geq 1} x^n \oZ^+_{n}\;. \label{eq:caug}
\end{equation}
Then, we extract from (\ref{eq:KK})-(\ref{eq:ZZ}) the complete set of relations satisfied by the generators $ \oK_{n}, \oZ^-_{n},\oZ^+_{m}$ for $n\geq 0,m\geq 1$. Consider (\ref{eq:ZZ}).  Expanding around $y=0$, equivalently we get:
\beqa
\big[\oZ^+_n,\oZ^-_m\big]&=& 4(\oK_{n+m} + \oK_{|n-m-1|}) \quad \mbox{for any} \quad n, m \geq 1.\nonumber
\eeqa
Consider (\ref{eq:KZp}). Around $y=0$, we get:
\beqa
\big[\oK_n,\oZ_m^+\big]&=&2(\oZ_{n+m}^+ + \oZ_{n-m+1}^+) \quad \mbox{for any}\quad 1\leq m\leq n,\nonumber\\
\big[\oK_n,\oZ_m^+\big]&=&2(\oZ_{n+m}^+ + \oZ_{-n+m}^+) \quad\  \mbox{for any}\quad m> n \geq 0.\nonumber
\eeqa
Consider  (\ref{eq:KZm}). Around $y=0$, we get:
\beqa
\big[\oK_n,\oZ_m^-\big]&=&- 2(\oZ_{n+m}^- + \oZ_{-n+m}^-) \quad \ \ \mbox{for any}\quad m\geq n \geq 0,\nonumber\\
\big[\oK_n,\oZ_m^-\big]&=&- 2(\oZ_{n+m}^- + \oZ_{n-m-1}^-) \quad\  \mbox{for any}\quad 1\leq m< n,\nonumber\\
\big[\oK_n,\oZ_0^-\big]&=&-2(\oZ_n^-+\oZ_{n-1}^- )\quad \mbox{for any}\quad n\geq 1.\nonumber
\eeqa
Finally, from (\ref{eq:KK}) we immediately obtain:
\beqa
\big[\oK_n,\oK_{n'}\big] =0,\quad \big[\oZ^+_m,\oZ^+_{m'}\big] =0, \quad \big[\oZ^-_n,\oZ^-_{n'}\big] =0  \quad \mbox{for any} \quad n,n'\geq 0,\ m,m'\geq 1\nonumber.
\eeqa
It remains to show that the algebra generated by $\oK_n,\oZ_m^\pm$ is isomorphic to the augmented Onsager algebra.
With the identification  
\beqa
 K_n = \oK_{|n|}, \quad Z_n^+&=&\oZ_{n}^+ \quad \mbox{if}\quad n\geq 1,\quad Z_n^+=\oZ_{-n+1}^+ \quad \mbox{if}\quad n\leq 0 ,\nonumber\\
 Z_n^-&=&\oZ_{n}^- \quad \mbox{if}\quad n\geq 0, \quad  Z_n^-=\oZ_{-n-1}^- \quad \mbox{if}\quad n< 0,\nonumber
\eeqa
one obtains the defining relations of the augmented Onsager algebra (\ref{eq:OAaug3})-(\ref{eq:OAaug1}).
\end{proof}
\begin{cor}    The augmented Onsager algebra is isomorphic to the  fixed point subalgebra $\widehat{sl_2}^{\theta_2}$ generated by (\ref{eq:isOAaug}).
\end{cor}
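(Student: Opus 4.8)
The plan is to obtain the claimed isomorphism as the composite of the FRT presentation just established with the homomorphism $\psi_{\theta_2}$ of Proposition \ref{propB}. First I would observe that the r-matrix \eqref{raugOnsager} appearing in the preceding Theorem is precisely the r-matrix \eqref{eq:rb} of Proposition \ref{propB}, specialized to the matrix $U(x)$ given by the second solution in \eqref{solU}, i.e.\ the one defining the automorphism $\theta_2$ through \eqref{eq:theta2}. Hence Proposition \ref{propB} applies and furnishes an algebra homomorphism $\psi_{\theta_2}:\cB\to\cT\cong\widehat{sl_2}$ whose image is, by definition, the fixed point subalgebra $\widehat{sl_2}^{\theta_2}$.

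Next I would track the specialized generators. The element $B(x)$ of \eqref{eq:BOaug} coincides with the one displayed in \eqref{eq:B2}, so under $\psi_{\theta_2}$ the modes $\oK_n,\oZ^\pm_m$ are sent to the explicit Cartan--Weyl elements recorded in \eqref{eq:isOAaug}. Combining this with the isomorphism between $\cO^{aug}$ and the specialization of $\cB$ provided by the Theorem, together with the mode identification given there, I obtain an algebra homomorphism $\Phi:\cO^{aug}\to\widehat{sl_2}^{\theta_2}$ acting on the generators by $K_n\mapsto h_n+h_{-n}+c\,\delta_n$, $Z^+_n\mapsto 2(e_n+e_{-n+1})$ and $Z^-_n\mapsto 2(f_n+f_{-n-1})$.

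It then remains to check that $\Phi$ is bijective. For surjectivity I would verify, using \eqref{eq:theta2}, that each of the elements $h_n+h_{-n}+c\,\delta_n$, $e_n+e_{-n+1}$ and $f_n+f_{-n-1}$ is $\theta_2$-invariant, and that as the indices run over their ranges these symmetrizations exhaust a basis of the $+1$ eigenspace of the involution $\theta_2$ in $\widehat{sl_2}$ (the $\theta_2$-orbit of each Cartan--Weyl generator is a two-element set whose symmetrization is one of the listed elements; the mildly degenerate Cartan sector, where $\theta_2(c)=-c$ forces the invariant combination $2h_0+c$, is handled separately). Thus the elements \eqref{eq:isOAaug} generate $\widehat{sl_2}^{\theta_2}$. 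For injectivity I would note that these symmetrized elements are linearly independent in the Cartan--Weyl basis, and that their only coincidences are exactly the orbit identifications $\{n\}\leftrightarrow\{-n\}$ for $K$, $\{n\}\leftrightarrow\{1-n\}$ for $Z^+$, and $\{n\}\leftrightarrow\{-n-1\}$ for $Z^-$ --- which are term by term the defining symmetry relations \eqref{eq:OAaug2} of $\cO^{aug}$. Hence $\Phi$ sends a basis of $\cO^{aug}$ to a basis of $\widehat{sl_2}^{\theta_2}$ and is an isomorphism.

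The main obstacle I anticipate is the injectivity step. The homomorphism property and surjectivity are essentially a repackaging of Proposition \ref{propB} and the Theorem; the real point is to confirm that the abstract relations \eqref{eq:OAaug2} impose no collapse beyond the $\theta_2$-orbit identifications, equivalently that no nontrivial combination of the FRT modes lies in $\ker\Phi$. This is precisely what forces the careful bookkeeping of indices --- the shift by one in the $Z^\pm$ symmetrizations and the $c\,\delta_n$ correction in the image of $K_0$ --- when matching the spanning set of the fixed subalgebra against the independent generators of $\cO^{aug}$.
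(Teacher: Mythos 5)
Your proposal is correct and follows the route the paper intends: the Corollary is stated without proof as the immediate combination of the preceding Theorem (identifying the mode algebra of \eqref{eq:BOaug} with $\cO^{aug}$) with Proposition \ref{propB} and the explicit realization \eqref{eq:B2}, \eqref{eq:isOAaug}. Your explicit verification that the symmetrized Cartan--Weyl elements are linearly independent and exhaust the $+1$ eigenspace of $\theta_2$ (including the $2h_0+c$ subtlety) simply makes rigorous the bijectivity that the paper leaves implicit.
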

Note that the relations (\ref{eq:KK})-(\ref{eq:ZZ}) provide a current presentation for the augmented Onsager algebra.
\vspace{2mm}

Generating functions of elements in the commutative subalgebras of the augmented Onsager algebra are now derived. By  Propositions \ref{pr:a1} we routinely obtain:
\begin{prop} The generating function 
\begin{equation} 
 t^{aug}(x)=2{\cal K}(x)^2+{\cal Z}^+(x){\cal Z}^-(x)+{\cal Z}^-(x){\cal Z}^+(x)\;
\end{equation}
is such that $[t^{aug}(x),t^{aug}(y)]=0$ for any $x,y$.
\end{prop}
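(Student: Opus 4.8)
The plan is to invoke Proposition \ref{pr:a1} directly. That proposition asserts that for any realization of the non-standard classical Yang-Baxter algebra $\cB$, the generating function $t(x)=tr(B(x)^2)$ satisfies $[t(x),t(y)]=0$. Since the augmented Onsager case is precisely the non-standard classical Yang-Baxter algebra \eqref{eq:Al} specialized to the $2\times 2$ matrix $B(x)$ of \eqref{eq:BOaug} with the r-matrix \eqref{raugOnsager} (as established in the preceding theorem), all that is required is to compute $tr(B(x)^2)$ explicitly for this $B(x)$ and verify it equals $t^{aug}(x)$.

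First I would carry out the matrix square. Writing
\begin{equation}
B(x)=\begin{pmatrix} {\cal K}(x) & {\cal Z}^-(x)\\ {\cal Z}^+(x) & -{\cal K}(x)\end{pmatrix},
\end{equation}
one computes
\begin{equation}
B(x)^2=\begin{pmatrix} {\cal K}(x)^2+{\cal Z}^-(x){\cal Z}^+(x) & * \\ * & {\cal Z}^+(x){\cal Z}^-(x)+{\cal K}(x)^2\end{pmatrix},
\end{equation}
so that
\begin{equation}
tr(B(x)^2)=2{\cal K}(x)^2+{\cal Z}^-(x){\cal Z}^+(x)+{\cal Z}^+(x){\cal Z}^-(x)=t^{aug}(x).
\end{equation}
The one point to keep track of is operator ordering: the entries ${\cal K}(x),{\cal Z}^\pm(x)$ are generating functions of noncommuting algebra elements, so the products on the anti-diagonal and the off-diagonal contributions to the trace must be taken in the order dictated by matrix multiplication, which is exactly why both orderings ${\cal Z}^+{\cal Z}^-$ and ${\cal Z}^-{\cal Z}^+$ appear (they come from the two diagonal entries) and cannot be collapsed.

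Once the identity $tr(B(x)^2)=t^{aug}(x)$ is established, the commutativity $[t^{aug}(x),t^{aug}(y)]=0$ follows immediately from Proposition \ref{pr:a1}, since that proposition holds for any solution $\overline r(x,y)$ of the non-standard classical Yang-Baxter equation, and \eqref{raugOnsager} is such a solution by construction (cf. the remark following Proposition \ref{propB2}). There is essentially no obstacle here: the entire content is the trivial trace computation, and the nontrivial work—the proof that $t(x)$ generates a commutative family—has already been done abstractly in Proposition \ref{pr:a1}. The proof is therefore a one-line reduction to the general result plus the explicit evaluation of the trace.
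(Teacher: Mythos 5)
Your proposal is correct and is exactly the paper's argument: the authors state that the result follows ``routinely'' from Proposition \ref{pr:a1}, i.e.\ one identifies $t^{aug}(x)=tr\left(B(x)^2\right)$ for the matrix $B(x)$ of \eqref{eq:BOaug} and invokes the general commutativity statement. Your explicit evaluation of the trace, including the remark on operator ordering producing both ${\cal Z}^+{\cal Z}^-$ and ${\cal Z}^-{\cal Z}^+$, simply fills in the routine step the paper leaves implicit.
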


On the other hand, by Proposition \ref{pr:a2} it follows:
\begin{prop} Let $\tau,\nu,\nu^*$ be arbitrary scalars. The generating function 
\begin{equation} \label{eq:t2}
 b^{aug}(x)= \tau{\cal K}(x)+\nu(1+1/x){\cal Z}^+(x)+\nu^*(x+1){\cal Z}^{-}(x)
\end{equation}
is such that $[b^{aug}(x),b^{aug}(y)]=0$ for any $x,y$. 
\end{prop}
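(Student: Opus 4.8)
The plan is to invoke Proposition \ref{pr:a2} directly, which reduces the claim to producing a matrix $M(x)$ solving the auxiliary equation \eqref{eq:reD}
\[
 [\,\mathrm{tr}_1\big(\overline{r}_{12}(x,y)\,M_1(x)\big)\,,\,M_2(y)\,]=0,
\]
with $\overline{r}_{12}(x,y)$ the augmented-Onsager r-matrix \eqref{raugOnsager}. By analogy with the Onsager case treated just above, I expect the correct ansatz to be a $2\times 2$ matrix $M(x)$ whose entries are (Laurent) polynomials in $x$ tuned so that the diagonal and off-diagonal parts reproduce the prefactors $\tau$, $\nu(1+1/x)$, $\nu^*(x+1)$ appearing in \eqref{eq:t2}. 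Concretely, reading off \eqref{tb2} together with the parametrization \eqref{eq:BOaug} of $B(x)$, the trace $\mathrm{tr}\,M(x)B(x)$ must equal $b^{aug}(x)$, so I would set
\begin{equation}
 M(x)=\begin{pmatrix}
   \tfrac{\tau}{2} & \nu^*(x+1)\\[2pt]
   \nu(1+1/x) & -\tfrac{\tau}{2}
 \end{pmatrix},\nonumber
\end{equation}
the diagonal producing $\tau\,{\cal K}(x)$ and the off-diagonal entries pairing with ${\cal Z}^\pm(x)$ to give the stated coefficients.

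First I would compute $\overline{r}_{12}(x,y)$ explicitly from \eqref{raugOnsager}, using the $\widehat{sl_2}$ r-matrix \eqref{def:r} and the matrix $U(x)$ of \eqref{solU}; the term $U_1(x)\,r_{12}^{t_1}(1/(xy))\,U_1(x)^{-1}$ is obtained by conjugation and transposition in the first space, a purely mechanical step. Next I would form the partial trace $\mathrm{tr}_1\big(\overline{r}_{12}(x,y)M_1(x)\big)$, which yields a $2\times 2$ matrix in space $2$ with entries rational in $x,y$. Finally I would verify that this matrix commutes with $M_2(y)$; since both are $2\times 2$, the commutator is governed by a single scalar condition (the off-diagonal/diagonal balance), and I expect it to vanish identically once the entries of $M(x)$ are chosen as above.

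The main obstacle will be the bookkeeping in the partial trace: the r-matrix $\overline{r}_{12}(x,y)$ is a genuinely non-skew-symmetric, two-variable object, so $\mathrm{tr}_1(\overline{r}_{12}M_1)$ mixes the $x$-dependence from $\overline{r}$ with that from $M(x)$, and one must keep the rational prefactors $\tfrac{1}{(x-y)(xy-1)}$ straight. The commutation $[\,\cdot\,,M_2(y)\,]=0$ then becomes a rational identity in $x$ and $y$ that I would clear of denominators and check by matching coefficients; I anticipate that the freedom in the three scalars $\tau,\nu,\nu^*$ is exactly what makes the condition hold for all of them simultaneously, paralleling the role of $\kappa,\kappa^*,\mu$ in the Onsager proposition. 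Once \eqref{eq:reD} is confirmed, the conclusion $[b^{aug}(x),b^{aug}(y)]=0$ follows immediately from Proposition \ref{pr:a2}, completing the proof.
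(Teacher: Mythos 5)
Your overall strategy is exactly the paper's: produce a matrix $M(x)$ solving \eqref{eq:reD} for the r-matrix \eqref{raugOnsager}, observe that $\mathrm{tr}\,M(x)B(x)=b^{aug}(x)$, and conclude by Proposition \ref{pr:a2}. However, your candidate matrix is wrong, and the error is not cosmetic. Since $\mathrm{tr}\,M(x)B(x)=\sum_{i,j}M_{ij}(x)B_{ji}(x)$, with $B(x)$ parametrized as in \eqref{eq:BOaug} the $(1,2)$ entry of $M(x)$ multiplies $B_{21}(x)={\cal Z}^+(x)$ and the $(2,1)$ entry multiplies $B_{12}(x)={\cal Z}^-(x)$. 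Your ansatz therefore gives
\begin{equation}
\mathrm{tr}\,M(x)B(x)=\tau\,{\cal K}(x)+\nu^*(x+1)\,{\cal Z}^+(x)+\nu(1+1/x)\,{\cal Z}^-(x),\nonumber
\end{equation}
which is not $b^{aug}(x)$: because $(x+1)=x\,(1+1/x)$ and the currents ${\cal Z}^+(x)$ and ${\cal Z}^-(x)$ start at different modes, this differs from \eqref{eq:t2} by a genuine factor of $x$ in each coupling and cannot be absorbed by relabelling $\nu\leftrightarrow\nu^*$. The matrix you need has the off-diagonal entries swapped; the paper takes
\begin{equation}
M(x)=\begin{pmatrix}\tau & \nu(1+1/x)\\ \nu^*(x+1) & 0\end{pmatrix}.\nonumber
\end{equation}
(Your splitting of the diagonal as $\tau/2,-\tau/2$ is immaterial: only $M_{11}-M_{22}$ enters the trace, and since $\mathrm{tr}_1\,\overline{r}_{12}(x,y)=0$ for the r-matrix \eqref{raugOnsager}, shifting $M(x)$ by a multiple of the identity does not affect \eqref{eq:reD} either.)

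Beyond this, your verification of \eqref{eq:reD} is only announced ("I expect it to vanish identically"), but the paper is equally terse on that step, so I would not count it against you. The substantive defect is the matrix itself: as written, even a successful check of \eqref{eq:reD} for your $M(x)$ would establish commutativity of a different one-parameter family of generating functions, not of $b^{aug}(x)$ as defined in \eqref{eq:t2}. Swap the off-diagonal entries and the argument goes through as in the paper.
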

\begin{proof} One shows that
\begin{equation}
 M(x)=\begin{pmatrix}
        \tau&\nu(1+1/x)\\
       \nu^*(x+1)&0
       \end{pmatrix}
 \end{equation}
is a solution of the relation \eqref{eq:reD} with the r-matrix \eqref{raugOnsager}. From (\ref{tb2}), we immediately get (\ref{eq:t2}).
\end{proof}
\begin{rem} Expanding the generating function $b^{aug}(x)$ in $x$, one produces mutually commuting quantities of the augmented Onsager algebra $\cO^{aug}$. 
For $k=0,1,2,\dots$ and the convention $Z_0^+=Z_{-1}^-=0$, the coefficients of the power series read:
\begin{equation}
 I^{aug}_k=\tau K_k +\nu( Z_k^+ + Z^+_{k+1}) +\nu^*( Z_{k-1}^- + Z^-_{k}).
\end{equation}
\end{rem}
\vspace{1mm}

\subsection{The $sl_2$-invariant Onsager algebra\label{sec:inOns}} In this subsection, we introduce an algebra that we call the $sl_2$-invariant algebra. This algebra is viewed as the classical analog of the $U_q(gl_2)$-invariant $q-$Onsager algebra\footnote{In the context of the half-infinite XXZ spin chain, the $U_q(gl_2)$-invariant $q-$Onsager algebra algebra characterizes the hidden non-Abelian infinite dimensional symmetry of the
 Hamiltonian with $U_q(gl_2)$-invariant (special diagonal) boundary conditions (for the finite open XXZ spin chain, see \cite{PS}). This motivates the terminology used in \cite{BB17} and here.
In the defining relations of the $sl_2$-invariant algebra, one recognizes the $sl_2$-subalgebra (\ref{eq:sl2O0}).} introduced in \cite[subsection 2.4]{BB17}.
\begin{defn}\label{def:invOA}
The $sl_2$-invariant Onsager algebra $\cO^{inv}$ is generated by $\{H_n,E_{n},F_n|n \in\ZZ\}$  subject to the following relations
 \begin{eqnarray}
 && [E_n,E_m]=[F_n,F_m]=[H_n,H_m]=0,\label{eq:OIsl1}\\
 && [H_n,E_m]=2(E_{n+m}+E_{-n+m})\ ,\quad [H_n,F_m]=-2(F_{n+m}+F_{-n+m})\;,\label{eq:OIsl11}\\
 && [E_n,F_m]=H_{n+m}+H_{-n+m}\ .\label{eq:OIsl21}
 \end{eqnarray}
\end{defn} 

As a consequence of the commutation relations (\ref{eq:OIsl1})-(\ref{eq:OIsl21}), linear relations among the generators $\{E_n,F_n,H_n|{n\in \mathbb Z}\}$ can be exhibited. Define:
\ben
\E_{n}=E_n-E_{-n}, \quad \F_{n}=F_n-F_{-n}, \quad \HH_{n}=H_n-H_{-n}.
\een
From (\ref{eq:OIsl21}) and (\ref{eq:OIsl1})  one gets
\ben\label{cEF1}
[F_n,\E_m]=[E_n,\F_m]=[E_n,\E_m]=[F_n,\F_m]=0.\label{eq:c0}
\een
Using (\ref{eq:OIsl21}) with $n=0$ we have $H_m=\frac{1}{2}[E_0,F_m]$. Applying the Jacobi identify we obtain
\ben
[H_n,\E_m]=\frac{1}{2}[[E_0,F_m] ,\E_m]=0, \qquad
[H_n,\F_m]=\frac{1}{2}[[E_0,F_m] ,\F_m]=0. \label{eq:c1}
\een
Also, using the commutation relations (\ref{eq:OIsl11}),  we find
\ben\label{Re1}
[\HH_n,F_m]=[\HH_n,E_m]=[\HH_n,H_m]=0.\label{eq:c2}
\een
Thus, according to (\ref{eq:c0}), (\ref{eq:c1}) and (\ref{eq:c2}) the elements $\{\HH_{n},\E_{n},\F_{n}\}$ with $n\in \ZZ$ belong to the center of $\cO^{inv}$.
Finally, observe that the commutation relations  (\ref{eq:OIsl11}) imply
\ben
[H_n,\F_m]=-2(\F_{n+m}+\F_{-n+m}),\qquad [H_n,\E_m]=2(\E_{n+m}+\E_{-n+m}).
\een
Together with (\ref{eq:c1}), it follows $\F_{n+m}+\F_{-n+m}=\E_{n+m}+\E_{-n+m}=0$ for all $n$ and $m$. In particular, for $n=0$ one gets $\E_{m}=\F_{m}=0$. It implies $\HH_m=\frac{1}{2}[E_0,\F_m]=0$. As a consequence, in the algebra $\cO^{inv}$ the following linear relations hold:
\ben
E_n-E_{-n}=0, \quad F_n-F_{-n}=0, \quad H_n-H_{-n}=0.\label{eq:EFHlin}
\een

From the defining relations (\ref{eq:OIsl1})-(\ref{eq:OIsl21}), one finds that the six generators $H_0,E_0, F_0,H_1,E_1, F_1$ satisfy the relations:
 \ben
&&\null[H_0,E_0]=4E_0, \quad [H_0,F_0]=-4F_0,\quad [E_0,F_0]=2H_{0},\label{eq:sl2O0}\\
&&\null[H_0,E_1]=[H_1,E_0]=4E_1, \quad [H_0,F_1]=[H_1,F_0]=-4F_1,\quad [E_0,F_1]=[E_1,F_0]=2H_{1},\\
&&\null[H_1,[E_1,F_1]]=0.\label{eq:sl2O3}
 \een
Note that these relations are the classical analogs of \cite[eqs. (2.18)]{BB17}.

\begin{thm}
The non-standard classical Yang-Baxter algebra (\ref{eq:Al}) specialized for
\begin{equation}
 B(x)=\begin{pmatrix}
       {\cal H}(x) &{\cal F}(x)\\
       {\cal E}(x) & -{\cal H}(x)
      \end{pmatrix}\label{eq:BOinv}
\end{equation}
with 
\begin{equation}
{\cal H}(x)=H_{0}/2+\sum_{n\geq 1} x^n H_{n}\quad,\quad {\cal E}(x)=E_0/2+\sum_{n\geq 1} x^n E_{n}
 \quad,\quad {\cal F}(x)=F_0/2+\sum_{n\geq 1} x^n F_{n}\;,
\end{equation}
and the r-matrix given by
 \begin{equation}
   \overline{r}_{12}(x,y)=r_{12}(x/y)+\kappa_+ \ r_{12}^{t_1}(1/(xy))\ \kappa_+^{-1} \quad \text{where \ $\kappa_+$ \ is given in Example \ref{ex:B1}}, \label{rinvOnsager}
 \end{equation}
provides an FRT presentation of the $sl_2$-invariant Onsager algebra.
\end{thm}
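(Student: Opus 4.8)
The plan is to follow the template of the two preceding theorems, adapting it to the reflection-type r-matrix built from $\kappa_+$. First I observe that (\ref{rinvOnsager}) is precisely the r-matrix (\ref{eq:rb2}) associated with the constant k-matrix $k(x)=\kappa_+$ of Example \ref{ex:B1}; by Proposition \ref{propB2} it therefore solves (\ref{eq:nsCYBE}) and (\ref{eq:Al}) genuinely defines a Lie algebra, so no Jacobi identity has to be re-verified. The computational core is then to substitute the ansatz (\ref{eq:BOinv}) into (\ref{eq:Al}) and evaluate the $4\times4$ matrix commutators with the explicit $\overline{r}_{12}(x,y)$. Reading off the four matrix entries I expect a closed system of current relations among $\mathcal{H}(x),\mathcal{E}(x),\mathcal{F}(x)$, the exact analogue of (\ref{eq:GG})--(\ref{eq:AAmp}) and of (\ref{eq:KK})--(\ref{eq:ZZ}); these constitute the current presentation. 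I anticipate $[\mathcal{E}(x),\mathcal{E}(y)]=[\mathcal{F}(x),\mathcal{F}(y)]=[\mathcal{H}(x),\mathcal{H}(y)]=0$, while the mixed brackets carry rational prefactors with denominators $(x-y)(xy-1)$ symmetric under $x\leftrightarrow1/x$, reflecting the symmetrisation introduced by $\kappa_+$.

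Next I would pass to modes. Inserting the expansions of $\mathcal{H},\mathcal{E},\mathcal{F}$ (keeping the factors $1/2$ on the zero modes, which are needed for the zero-mode brackets to close, exactly as $K_0/2$ in (\ref{eq:caug})), I expand each current relation around $y=0$ and identify the coefficient of $x^{\ell}y^{p}$. The aim is to obtain, for $n,m\geq0$, relations of the shape $[H_n,E_m]=2(E_{n+m}+E_{|n-m|})$, $[H_n,F_m]=-2(F_{n+m}+F_{|n-m|})$ and $[E_n,F_m]\propto(H_{n+m}+H_{|n-m|})$, together with the vanishing of $[E,E]$, $[F,F]$, $[H,H]$. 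As in the augmented case, I expect the expansion of the mixed brackets to split into the ranges $m\leq n$ and $m>n$, which is precisely what folds the signed index $-n+m$ into $|n-m|$; note that, since $\kappa_+$ is spectral-parameter independent, no index shift of the augmented type $|n-m-1|$ arises here.

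Finally I would identify this mode algebra with $\cO^{inv}$. Because the $\kappa_+$-conjugation symmetrises the spectral parameter, the entries of $B(x)$ carry only non-negative modes, whereas $\cO^{inv}$ is indexed by all of $\ZZ$ subject to the folding relations (\ref{eq:EFHlin}) $E_n=E_{-n}$, $F_n=F_{-n}$, $H_n=H_{-n}$ established in this subsection. I therefore extend the non-negative-mode generators to all of $\ZZ$ by imposing (\ref{eq:EFHlin}) and check that, under the substitution $-n+m\mapsto|n-m|$ and the fixed normalisation of the brackets, the defining relations (\ref{eq:OIsl1})--(\ref{eq:OIsl21}) reproduce exactly the extracted mode relations. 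Consistency with the embedding (\ref{eq:Eb}) into $\widehat{sl_2}^{\kappa_+}$ and with the finite $sl_2$ subalgebra (\ref{eq:sl2O0}) furnishes an independent check.

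The step I expect to be the main obstacle is purely combinatorial: tracking how the rational current coefficients expand into the $|n-m|$ index structure and confirming that the folding $n\mapsto|n|$ reproduces the symmetric combinations $E_{-n+m},F_{-n+m},H_{-n+m}$ of the target relations without generating spurious relations or dropping any. Fixing the relative normalisation of the $[E,F]$ bracket against the $[H,E]$ and $[H,F]$ brackets --- equivalently, making the identification compatible with the embedding (\ref{eq:Eb}) --- is the one genuinely delicate bookkeeping point.
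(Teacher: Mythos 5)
Your proposal is correct and follows essentially the same route as the paper: substitute the ansatz into (\ref{eq:Al}) to obtain the current relations (\ref{eq:HH})--(\ref{eq:EF}), expand in modes around $y=0$, and identify the nonnegative-mode generators with those of $\cO^{inv}$ via the folding $H_n\mapsto \oH_{|n|}$, $E_n\mapsto \oE_{|n|}$, $F_n\mapsto \oF_{|n|}$ justified by the linear relations (\ref{eq:EFHlin}). The paper itself omits the mode-expansion bookkeeping ``by analogy with the previous cases,'' and your plan fills in exactly those omitted steps.
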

\begin{proof} 
Inserting (\ref{eq:BOinv}) into (\ref{eq:Al}), one has:
\begin{eqnarray}
&&[\ {\cal H}(x)\ ,\ {\cal H}(y)\ ]=0\quad , \quad [\ {\cal E}(x)\ ,\ {\cal E}(y)\ ]=0\quad , \quad [\ {\cal F}(x)\ ,\ {\cal F}(y)\ ]=0\quad ,\label{eq:HH}\\
&&[\ {\cal H}(x)\ ,\ {\cal E}(y)\ ]=\frac{2}{(x-y)(xy-1)}\big( x(y^2-1) {\cal E}(x)-y(x^2-1){\cal E}(y)\big)\quad , \\
&&[\ {\cal H}(x)\ ,\ {\cal F}(y)\ ]=-\frac{2}{(x-y)(xy-1)}\big(x(y^2-1)  {\cal F}(x)-y(x^2-1){\cal F}(y)\big)\quad , \\
&&[\ {\cal E}(x)\ ,\ {\cal F}(y)\ ]=\frac{4}{(x-y)(xy-1)} \big( x(y^2-1)  {\cal H}(x)\ -\ y(x^2-1) {\cal H}(y) \big)\;.\label{eq:EF}
\end{eqnarray}
Define 
\begin{equation}
{\cal H}(x)=\oH_{0}/2+\sum_{n\geq 1} x^n \oH_{n}\quad,\quad {\cal E}(x)=\oE_0/2+\sum_{n\geq 1} x^n \oE_{n}
 \quad,\quad {\cal F}(x)=\oF_0/2+\sum_{n\geq 1} x^n \oF_{n}\;.
\end{equation}
As we proceed by analogy with the previous cases, we omit the details.  With the identification  
\beqa
H_n = \oH_{|n|},\quad  E_n = \oE_{|n|},\quad F_n = \oF_{|n|},  \nonumber
\eeqa
one obtains the defining relations of the $sl_2$-invariant Onsager algebra (\ref{eq:OIsl1}),  (\ref{eq:OIsl21}).
\end{proof}

\begin{cor} The $sl_2$-invariant Onsager algebra is isomorphic to the  fixed point subalgebra $\widehat{sl_2}^{\kappa_+}$ generated by \eqref{eq:Eb}. 
\end{cor}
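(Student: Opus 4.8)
The plan is to read the isomorphism off directly from the two results already in hand: the preceding theorem, which realizes $\cO^{inv}$ as the specialized non-standard Yang-Baxter algebra \eqref{eq:BOinv} with $r$-matrix \eqref{rinvOnsager}, and Example~\ref{ex:B1}, which is exactly this same specialization (same $B(x)$, same $\kappa_+$) and in which Proposition~\ref{propB2} supplies a Lie algebra homomorphism into $\widehat{sl_2}$ with image $\widehat{sl_2}^{\kappa_+}$. Combining the theorem's identification $H_n=\oH_{|n|}$, $E_n=\oE_{|n|}$, $F_n=\oF_{|n|}$ with the assignment \eqref{eq:Eb} produces the candidate map
\[
 \Phi:\cO^{inv}\longrightarrow\widehat{sl_2},\qquad E_n\mapsto 2(e_n+e_{-n}),\quad F_n\mapsto 2(f_n+f_{-n}),\quad H_n\mapsto h_n+h_{-n}.
\]
I would then prove that $\Phi$ is a Lie algebra isomorphism onto $\widehat{sl_2}^{\kappa_+}$.

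That $\Phi$ is a well-defined homomorphism is automatic: it is nothing but the homomorphism of Proposition~\ref{propB2} for $k=\kappa_+$, transported through the FRT identification of the theorem, so no relation needs to be re-checked. Its image is by definition the subalgebra generated by the elements \eqref{eq:Eb}, that is $\widehat{sl_2}^{\kappa_+}$, whence surjectivity onto $\widehat{sl_2}^{\kappa_+}$ is immediate.

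The substance of the proof, and the step I expect to be the main obstacle, is injectivity, i.e.\ showing that $\cO^{inv}$ carries no hidden linear relations among its generators beyond \eqref{eq:EFHlin}. I would handle this by comparing explicit vector-space bases. On the source side, the defining relations \eqref{eq:OIsl1}-\eqref{eq:OIsl21} express every bracket of two generators again as a linear combination of generators, so $\cO^{inv}$ is spanned as a vector space by $\{E_n,F_n,H_n\mid n\in\ZZ\}$; the linear relations \eqref{eq:EFHlin} then reduce this spanning set to $\{E_n,F_n,H_n\mid n\geq 0\}$. On the target side, recalling that $\widehat{sl_2}^{\kappa_+}$ is the fixed-point subalgebra under the Lusztig automorphism $e_n\mapsto e_{-n}$, $f_n\mapsto f_{-n}$, $h_n\mapsto h_{-n}$, $c\mapsto-c$, and that $\{e_m,f_m,h_m\mid m\in\ZZ\}$ are linearly independent in $\widehat{sl_2}$, one sees that $\{e_n+e_{-n},\,f_n+f_{-n},\,h_n+h_{-n}\mid n\geq 0\}$ is a basis of $\widehat{sl_2}^{\kappa_+}$. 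Since $\Phi$ carries the spanning set of $\cO^{inv}$ bijectively, up to nonzero scalars, onto this basis, it sends a spanning set to a linearly independent set; therefore $\Phi$ is injective, the spanning set of $\cO^{inv}$ is in fact a basis, and $\Phi$ is a linear isomorphism onto $\widehat{sl_2}^{\kappa_+}$, hence a Lie algebra isomorphism. This mirrors the treatment of the Onsager algebra (the remark following its theorem, together with \cite{Davies,Roan}) and of the augmented Onsager algebra, where the same composition of the FRT presentation with the fixed-point realization yields the isomorphism.
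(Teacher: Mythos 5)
Your proposal is correct and follows essentially the same route as the paper, which states this corollary without a separate proof precisely because it is the composition of the preceding theorem (the FRT presentation of $\cO^{inv}$) with the homomorphism of Proposition \ref{propB2} realized in Example \ref{ex:B1}. The only content you add is the explicit injectivity argument via comparison of the spanning set $\{E_n,F_n,H_n\mid n\geq 0\}$ with the basis $\{e_n+e_{-n},f_n+f_{-n},h_n+h_{-n}\mid n\geq 0\}$ of the fixed-point subalgebra, which the paper leaves implicit (as it does for the Onsager and augmented Onsager cases); this is a sound and welcome completion rather than a departure.
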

Note that the relations (\ref{eq:HH})-(\ref{eq:EF}) provide a current presentation for the Onsager algebra.
\vspace{2mm}

Generating functions of elements in the commutative subalgebras of the $sl_2$-invariant Onsager algebra are now derived.  By  Propositions \ref{pr:a1}:
\begin{prop} The generating function 
\begin{equation} 
 t^{inv}(x)=2{\cal H}(x)^2+{\cal E}(x){\cal F}(x)+{\cal F}(x){\cal E}(x)\;
\end{equation}
is such that $[t^{inv}(x),t^{inv}(y)]=0$ for any $x,y$.
\end{prop}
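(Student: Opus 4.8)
The plan is to observe that this proposition is an immediate specialization of the general commutativity result in Proposition \ref{pr:a1}, exactly as in the Onsager and augmented Onsager cases treated above. The only thing that genuinely needs checking is that the quadratic generating function $t^{inv}(x)$ coincides with $tr\!\left(B(x)^2\right)$ for the matrix $B(x)$ introduced in \eqref{eq:BOinv}; once that identification is in place, the conclusion follows with no further work.

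First I would compute the trace of $B(x)^2$ for the matrix $B(x)$ of \eqref{eq:BOinv} directly. The two diagonal entries of the (noncommutative) matrix product $B(x)^2$ are ${\cal H}(x)^2+{\cal F}(x){\cal E}(x)$ and ${\cal E}(x){\cal F}(x)+{\cal H}(x)^2$, so that
\begin{equation}
tr\!\left(B(x)^2\right)=2{\cal H}(x)^2+{\cal E}(x){\cal F}(x)+{\cal F}(x){\cal E}(x)=t^{inv}(x).
\end{equation}
By the preceding theorem, this $B(x)$ satisfies the defining relations \eqref{eq:Al} of $\cB$ with the non-standard r-matrix \eqref{rinvOnsager}. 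Hence Proposition \ref{pr:a1} applies verbatim to this realization and yields $[t^{inv}(x),t^{inv}(y)]=0$ for all $x,y$, which is precisely the claim.

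There is essentially no obstacle here: the content has already been established in the general setting of Section \ref{sec:CS}, and the present statement is merely its translation into the generators ${\cal H},{\cal E},{\cal F}$ of the $sl_2$-invariant Onsager algebra. The only point requiring a moment's care is that the identification $t^{inv}(x)=tr\!\left(B(x)^2\right)$ is a purely formal matrix-level computation, valid independently of the $1/2$ normalizations of the constant terms in ${\cal H},{\cal E},{\cal F}$, since those normalizations affect only the internal power-series expansion of the entries and not the trace taken in the auxiliary space $\CC^2$. For this reason I expect the argument to be dispatched in a single line, just as the analogous statements for $t^{ons}$ and $t^{aug}$ were obtained routinely from Proposition \ref{pr:a1}.
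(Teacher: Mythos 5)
Your proposal is correct and is exactly the route the paper takes: the paper states this proposition as following "by Proposition \ref{pr:a1}", i.e.\ by identifying $t^{inv}(x)=tr\bigl(B(x)^2\bigr)$ for the matrix $B(x)$ of \eqref{eq:BOinv} and invoking the general commutativity result, just as you do. Your explicit check of the diagonal entries of $B(x)^2$ only makes explicit what the paper leaves as routine.
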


By Proposition \ref{pr:a2}, it follows:
\begin{prop} Let $\mu_0, \mu_1, \mu_2$ be arbitrary scalars. The generating function 
\begin{equation} \label{eq:t3}
 b^{inv}(x)= \mu_0{\cal H}(x)+\mu_1{\cal E}(x)+\mu_2{\cal F}(x)
\end{equation}
is such that $[b^{inv}(x),b^{inv}(y)]=0$ for any $x,y$. 
\end{prop}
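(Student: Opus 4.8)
The plan is to invoke Proposition \ref{pr:a2}, exactly along the route already used for the Onsager and augmented Onsager cases. The first step is to exhibit a matrix $M(x)\in End(\CC^2)$ for which $tr\, M(x)B(x)$, with $B(x)$ given by \eqref{eq:BOinv}, reproduces the generating function $b^{inv}(x)$. Writing a generic $M(x)=\begin{pmatrix} a & b \\ c & d\end{pmatrix}$ and using \eqref{eq:BOinv}, one computes $tr\, M(x)B(x)=(a-d)\,{\cal H}(x)+b\,{\cal E}(x)+c\,{\cal F}(x)$, so the requirement $tr\, M(x)B(x)=\mu_0{\cal H}(x)+\mu_1{\cal E}(x)+\mu_2{\cal F}(x)$ fixes $a-d=\mu_0$, $b=\mu_1$ and $c=\mu_2$. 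Since $B(x)$ is traceless, adding a multiple of the identity to $M(x)$ affects neither the trace nor the commutator in \eqref{eq:reD}, so only the combination $a-d$ is relevant; moreover, the coefficients of $b^{inv}(x)$ carry no spectral parameter (mirroring the $sl_2$-invariant structure), which singles out the convenient constant representative
\begin{equation}
 M=\begin{pmatrix} \mu_0 & \mu_1 \\ \mu_2 & 0 \end{pmatrix}.
\end{equation}

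The second and decisive step is to verify that this $M$ solves the relation \eqref{eq:reD} with the r-matrix \eqref{rinvOnsager}. Concretely, I would write out $\overline{r}_{12}(x,y)=r_{12}(x/y)+\kappa_+\,r_{12}^{t_1}(1/(xy))\,\kappa_+^{-1}$ from \eqref{def:r} together with the definition of $\kappa_+$ in Example \ref{ex:B1}, form the partial trace $tr_1\big(\overline{r}_{12}(x,y)M_1\big)\in End(\CC^2)$, and check that the resulting $2\times 2$ matrix commutes with $M_2=M$. As both factors are independent of any third tensor slot, this collapses to a single finite $2\times 2$ commutator computation in the two spectral parameters $x,y$. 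Once \eqref{eq:reD} is confirmed, Proposition \ref{pr:a2} applied to \eqref{tb2} yields $[b^{inv}(x),b^{inv}(y)]=0$ directly.

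I expect this last verification to be the only genuine obstacle: one must confirm that the rational dependence on $x$ and $y$ produced by $r_{12}(x/y)$ and by $r_{12}^{t_1}(1/(xy))$ conspires so that $tr_1\big(\overline{r}_{12}(x,y)M_1\big)$ lies in the centralizer of $M$ (typically, that it is a scalar-plus-$M$ combination), the cross terms in $x,y$ cancelling. As an independent consistency check, expanding $b^{inv}(x)$ in powers of $x$ and feeding the result into the current relations \eqref{eq:HH}--\eqref{eq:EF} should reproduce the same commuting family, giving a second, more laborious derivation of the claim.
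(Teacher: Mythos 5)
Your proposal is correct and follows exactly the paper's route: it exhibits the same matrix $M=\begin{pmatrix}\mu_0&\mu_1\\ \mu_2&0\end{pmatrix}$, checks that $tr\,MB(x)$ reproduces $b^{inv}(x)$, verifies relation \eqref{eq:reD} with the r-matrix \eqref{rinvOnsager}, and then invokes Proposition \ref{pr:a2}. The only difference is presentational --- you derive $M$ from a generic ansatz rather than stating it outright --- so there is nothing substantive to add.
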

\begin{proof} One shows that
\begin{equation}
 M(x)=\begin{pmatrix}
        \mu_0&\mu_1\\
      \mu_2&0
       \end{pmatrix}
 \end{equation}
is a solution of the relation \eqref{eq:reD} for the r-matrix \eqref{rinvOnsager}. From (\ref{tb2}), we immediately get (\ref{eq:t3}).
\end{proof}

\vspace{2mm}

\section{Concluding remarks}

In this letter,  it is shown that the classical Onsager, augmented Onsager and $sl_2$-invariant Onsager algebras   fit into the framework of the non-standard classical Yang-Baxter algebras. Using this framework, an FRT presentation is identified in each case. A current presentation is obtained and the generating functions for elements in their commutative subalgebras are constructed.\vspace{1mm}

Here, we mainly focused on different types of non-standard classical Yang-Baxter algebras associated with one of the simplest affine Kac-Moody algebra $\widehat{sl_2}$, and their explicit relations with the Onsager algebras. However, it is clear that the framework proposed in Section 2 can be easily applied
 to any higher rank affine Lie algebra and their fixed point subalgebras. Although not discussed in this letter, an FRT presentation and associated current presentations for any of the higher rank classical Onsager algebras \cite{UI,BB} could be derived in this framework. Another generalization of the construction presented here would be to consider automorphisms of order greater than $2$. Indeed, such automorphisms have been studied for the rational r-matrix in \cite{CY07} and provide interesting integrable systems.\vspace{1mm}

In view of the recently proposed presentation of the $q-$Onsager algebra in terms of root vectors \cite{BK4} satisfying a $q-$deformed analog
of the Onsager's relations (\ref{eq:OA1})-(\ref{eq:OA3}), it is tempting to reconsider 
the connection between the quantum reflection  algebra and coideal subalgebras of $U_q(\widehat{sl_2})$ \cite{MRS} or the $q-$Onsager algebra \cite{B1,BK} 
in light of the results presented here. Besides the two Examples \ref{ex:B1}, \ref{ex:B2}, it is also natural to ask for an interpretation of the homomorphic image (\ref{eq:Bxgg}) 
in terms of $\widehat{sl_2}$ subalgebras for the the most general solution (\ref{eq:k2}) of the classical reflection equation (\ref{eq:re}). 
\vspace{1mm} 

From the point of view of applications to quantum integrable systems, the FRT presentation for the Onsager algebras and related current presentations
offer a different perspective for the analysis of well-known integrable models such as the Ising or superintegrable Potts model. 
For instance, by analogy with the analysis of \cite{BB3}, 
a free field realization of the Onsager's currents may be considered to solve the spectral problem for the mutually 
commuting quantities (\ref{eq:t1}) and study correlation functions of local fields. In particular, it would be nice if it could provide a derivation of the celebrated Painlev\'e equations \cite{Perk,MPW,FZ} for the Ising model based on the Onsager algebra.
\vspace{1mm}

Some of these problems will be considered elsewhere.

\vspace{1mm}

\vspace{0.5cm}
\noindent{\bf Acknowledgements:} We thank an anonymous referee  for comments.
We thank Hubert Saleur and Jasper Stokman for discussions, which motivated the construction of an FRT presentation for the Onsager algebra and its current presentation  that could extend to any higher rank cases.  
P.B.  and N.C. are supported by C.N.R.S.  S.B. and N.C. thanks the LMPT for hospitality, where part of this work has been done.
S.B. is supported by a public grant as part of the Investissement d'avenir project, reference ANR-11-LABX-0056-LMH,
LabEx LMH.
\vspace{0.2cm}

\end{document}